\newtheorem{thm}{Theorem}[section]
\newtheorem{prop}[thm]{Proposition}
\newtheorem{defin}[thm]{Definition}
\newtheorem{rem}[thm]{Remark}
\newtheorem{exam}[thm]{Example}
\newtheorem{question}[thm]{Question}
\newcommand{\C}{{\mathbb{C}}}
\newcommand{\N}{{\mathbb{N}}}
\newcommand{\R}{{\mathbb{R}}}
\newcommand{\SP}{{\mathbb{S}}}
\newcommand{\Z}{{\mathbb{Z}}}
\newcommand{\cH}{{\mathcal{H}}}
\newcommand{\cS}{{\mathcal{S}}}
\newcommand{\cU}{{\mathcal{U}}}
\newcommand{\cV}{{\mathcal{V}}}
\newcommand{\Ker}{{\rm Ker\,}}
\newcommand{\Qed}{\ \hfill \qedsymbol \bigskip}
\begin{document}

\title{Contact topology and non-equilibrium thermodynamics}

\author{\textsc Michael Entov$^{1}$,\ Leonid Polterovich$^{2,3}$ }

\footnotetext[1]{Partially supported by the Israel Science Foundation grant 1715/18.}

\footnotetext[2]{ Partially supported by the Israel Science Foundation grant 1102/20.}

\footnotetext[3]{Corresponding author.}

\date{\today}

\maketitle

\begin{abstract} We describe a method, based on contact topology, of showing the existence of semi-infinite trajectories of contact Hamiltonian flows which start on one Legendrian submanifold and asymptotically converge to another Legendrian submanifold. We discuss a mathematical model of non-equilibrium thermodynamics where such trajectories play a role of relaxation processes, and illustrate our results in the case of the Glauber dynamics for the mean field Ising model.
\end{abstract}

\tableofcontents


\section{Introduction and outline}
The goal of the present paper is twofold. First, we describe a method, based on
``hard" contact topology, of {showing the existence of} semi-infinite trajectories of contact Hamiltonian flows which start on one Legendrian submanifold and asymptotically converge to another Legendrian submanifold. Second, we discuss a mathematical model of non-equilibrium thermodynamics where such trajectories play a role of relaxation processes, and illustrate
our results in the case of Glauber dynamics for the Ising model in the mean field approximation. Our starting point is the existence mechanism, called {\it interlinking},  for finite time-length trajectories between a pair of Legendrians developed in \cite{EP-Leg-pers-mod}.

\subsection{{Asymptotic trajectories of contact flows}}

{Let us briefly review a few preliminaries from contact geometry.}
Recall \cite{Geiges} that a (cooriented) {\it contact structure} on an odd-dimensional manifold $\Sigma^{2n+1}$ is a  field of tangent hyperplanes $\xi \subset T\Sigma$ forming
the kernel of a $1$-form $\lambda$ on $\Sigma$ with $\lambda \wedge (d\lambda)^n$ being a volume form on $\Sigma$. Note that $\lambda$ is not canonical: it is defined up to multiplication by a positive function. \color{black}{We shall denote a contact manifold
either as $(\Sigma,\xi)$ or, when we wish to highlight a specific contact form, as $(\Sigma, \lambda)$. \color{black} Every choice of the $1$-form $\lambda$ defines the {\it Reeb} vector field $R$ on $\Sigma$ by $i_R d\lambda =0$ and $\lambda(R)=1$.
 The flow of $R$ is called {\it the Reeb flow}.
An $n$-dimensional submanifold of $\Sigma$ is called {\it Legendrian} if it is everywhere tangent to $\xi$.  Finite pieces
of trajectories of the Reeb flow starting and ending at given Legendrian submanifolds
are called {\it Reeb chords}.

A vector field $v$  on $\Sigma$ preserving the contact structure $\xi$ is called a {\it contact} vector field. Given a contact form $\lambda$, {\it the contact Hamiltonian}
of a contact vector field $v$ is the function $H:= \lambda(v)$.
One can show that $H$ uniquely determines such a $v$.
For instance, the contact Hamiltonian of the Reeb vector field of $\lambda$ equals $1$ everywhere on $\Sigma$. On the other hand, the contact vector field $v$ generated by any strictly positive Hamiltonian $H$ is the Reeb vector field of $\lambda/H$.

\begin{exam}
\label{exam-onejet} {\rm Let $\Sigma = J^1 X = T^* X\times \R (z)$ be the 1-jet space of a smooth manifold $X$, together with the standard contact form $dz-pdq$ on it. Here $(p,q)$ are the canonical coordinates on $T^*X$. The Reeb vector field
is simply $\partial/\partial z$, and its flow is the shift in the $z$-coordinate. Given a smooth function $\phi$ on $X$, its $1$-jet map in $(p,q,z)$ - coordinates is given by
$$j^1\phi: X \to \Sigma, \;\; x \mapsto \left(\frac{\partial \phi}{\partial x}, x, \phi(x)\right)\;.$$ Its image $\Lambda= j^1\phi(X)$ is a Legendrian submanifold.
For $\phi=0$, $\Lambda$ is {\it the zero section} of $J^1X$.

The Hamiltonian vector field of a function $H(p,q,z)$
is given in $(p,q,z)$ - coordinates by
$$\left(\frac{\partial H}{\partial q} +p \frac{\partial H}{\partial z},\;-\frac{\partial H}{\partial p}, H -p\frac{\partial H}{\partial p}\right)\;.$$}
\end{exam}

\bigskip

Starting with the famous Arnold's conjecture on Reeb chords, the existence of Reeb chords of Legendrian submanifolds (under appropriate assumptions on the submanifolds and the ambient contact manifold) has been one of the central problems of contact dynamics. The following phenomenon, called {\it interlinking} (see Section~\ref{subsec-interlink} for precise definitions), is of a particular importance for the present paper: there exist pairs of disjoint Legendrians $\Lambda_0, \Lambda_1$ such that every Reeb flow (i.e., the flow of a strictly positive contact Hamiltonian) possesses a chord of controlled time-length starting on $\Lambda_0$ and ending on $\Lambda_1$. Moreover, in certain cases this property of $\Lambda_0, \Lambda_1$ holds for any pair $\Lambda'_0, \Lambda'_1$ obtained from $\Lambda_0, \Lambda_1$ by a sufficiently small (say, $C^\infty$-small) Legendrian isotopy -- in such a case we will say that the pair $(\Lambda_0, \Lambda_1)$ is {\it robustly interlinked}.

It is worth mentioning that interlinking is detected in \cite{EP-Leg-pers-mod} with the help of Legendrian Contact Homology (LCH)  \cite{EGH}, a sophisticated algebraic structure associated to a Legendrian link and involving ideas coming from the string theory. The key ingredient of LCH is a count
of special
non-compact
surfaces in $\Sigma \times \R$ whose boundaries lie on $\Lambda_0 \times \R$
and $\Lambda_1 \times
 \R
$
and which are asymptotic to a collection of the Reeb chords.
These surfaces are interpreted as world-sheets of open strings.
While LCH remains invisible in this paper, it would not be a stretch to say that our journey begins there and eventually leads, through contact topology, to non-equilibrium thermodynamics. We refer the reader to \cite{Morrison} for a survey of string theoretic aspects of symplectic topology, and to \cite{Aga} for interactions  between string theory
and Legendrian contact homology.

The first main objective of the present paper is to show that given an
interlinked pair $(\Lambda_0,\Lambda_1)$, certain contact Hamiltonians that vanish on $\Lambda_1$ and may change sign elsewhere, have semi-infinite trajectories starting on $\Lambda_0$  and {\it asymptotically converging to $\Lambda_1$, as $t\to +\infty$} (meaning that $d(\gamma(t),\Lambda_1)\to 0$ as $t\to +\infty$, where $d$ is a distance function defined by a Riemannian metric on $\Sigma$). Let us state here, rather informally, our main result in this direction and refer to Theorem~\ref{thm-A}  for the precise formulation, and to Theorem~\ref{thm-semimain} for a generalization.

Assume $H:\Sigma\to\R$ is a contact Hamiltonian and $v$ is its contact vector field $v$, so that the flow $\{\varphi^t\}$ of $v$ is defined for all times.

\bigskip
\noindent
{\bf ``Theorem" (an informal version of Theorem~\ref{thm-A} below).}
{\it Let a submanifold $M \subset \Sigma$ be the union of a finite number of the connected components of the
nodal set $\{H=0\}$ which separates $\Sigma$ into two open parts, $\Sigma_-$ and $\Sigma_+$.
Assume that $H$ is strictly positive on $\Sigma_+$ (but may change sign on $\Sigma_-$).
Let  $(\Lambda_0, \Lambda_1)$ be a robustly interlinked pair of Legendrian submanifolds
with $\Lambda_1 \subset M$.  Then, under certain dynamical assumptions concerning the contact flow $\{\varphi^t\}$ of $H$ near $\Lambda_1$ and on $\Sigma_+$, we have $\Lambda_0 \cap \Sigma_+ \neq \emptyset$, and there exists a trajectory of the flow, lying completely in $\Sigma_+$, which starts on $\Lambda_0$ and asymptotically converges to $\Lambda_1$.}

\medskip
\begin{exam} \label{exam-attjet} {\rm In the notations of Example~\ref{exam-onejet},
the zero section $\Lambda_1 \subset M:= \{z=0\}$ is the global attractor of the contact Hamiltonian flow generated by $-cz$, $c>0$
(see Section~\ref{subsec-attractors-repellers} for a precise definition of a global attractor).
This can be seen by solving the corresponding
Hamiltonian system $\dot{z}=-cz, \dot{p}=-cp$. Methods of contact topology enable us to
detect semi-infinite trajectories starting on $\Lambda_0 =\{p=0,z=-1\}$
and converging to $\Lambda_1$ as $t \to +\infty$ for more general Hamiltonian systems which
cannot be resolved explicitly. For instance, assuming that $X$ is a closed manifold,
we find such a trajectory for every Hamiltonian $H(p,q,z)$
which equals $-cz$ near $\Lambda_1$ and is positive and bounded on $\{z < 0\}$ ,
see Examples~\ref{exam-1-jet-space-0-section-and-its-Reeb-shift-homol-bonded} and \ref{exam-infty} below.}
\end{exam}

We refer to \cite{MS,WWY} for the existence and fine structure of global attractors of
flows generated by certain contact Hamiltonians convex in the momenta variables $p$.
These results were established by methods of the calculus of variations in the spirit of
the Aubry-Mather theory.

\subsection{Non-equilibrium thermodynamics}

The second theme of the present paper is application of the results outlined above
to non-equilibrium thermodynamics. In the geometric model of equilibrium thermodynamics (see e.g. \cite{HermannR-book73}) the thermodynamic phase space is described as the standard contact space $(\R^{2n+1}, \xi)$, $n\geq 1$, $\xi:= \Ker \lambda$, $\lambda = dz- pdq$, $p=(p_1,\ldots,p_n)$, $q=(q_1,\ldots,q_n)$.
Here  $z$, up to a sign, is a thermodynamic potential, such as internal energy or free energy, and $(p_i, q_i)$ are pairwise conjugate variables, such as {\it (temperature, entropy)} or
{\it (magnetization, external magnetic field)}. The fundamental thermodynamic equilibrium relation reads
\[
dz= \sum_{i=1}^n  p_i dq_i.
\]
Consequently, the set of equilibrium states of a thermodynamic system forms a properly embedded Legendrian submanifold $\Lambda_1 \subset (\R^{2n+1},\xi)$.


When a system in an equilibrium state undergoes a perturbation, it
either moves along the submanifold of equilibrium states
or  moves to a non-equilibrium state and then enters a dynamical process. {\it Relaxation processes}, i.e. a gradual return to the equilibrium, are of special interest. For instance, Prigogine  in his 1977 Nobel lecture \cite{Pri} emphasizes
the local stability of the thermodynamic equilibrium due to the fact that thermodynamic
potentials serve as Lyapunov functions near the equilibrium, and addresses a question
"Can we extrapolate this stability property further away from equilibrium?" (p. 269).
A number of papers \cite{Mrugala-et-al-RepMathPhys91,Mrugala-RepMathPhys2000, Bravetti,Haslach,Grmela-Ottinger-PRE97, Grmela-PhysA2002, Goto-JMP2015,V-Maschke,Leon}
propose to describe relaxation processes of non-equilibrium thermodynamics in terms of contact dynamics. The ways of choosing a contact Hamiltonian vary slightly from paper to paper -- the equation of motion of one of the parameters is provided by physical considerations, and the contact Hamiltonian is chosen to yield this equation (cf. a detailed comparison between
the standard physical and the contact models for the Ising model in Section~\ref{subsec-ising-1}). After a brief warm up with Newton's law of cooling illustrating
the main notions of contact thermodynamics, we pass to a detailed study of the Glauber dynamics of the Ising model in the mean field approximation, aiming at detecting relaxation processes. Let us state here an informal version of our main result in this direction -- see Theorem~\ref{thm-ising} below for a rigorous formulation.

\bigskip
\noindent
{\bf ``Theorem" (an informal version of Theorem~\ref{thm-ising} below).}
{\it If an equilibrium of the Ising model is disturbed (due to a sudden change of parameters of the model), then, under certain constraints on the parameters, there exists a relaxation process whose initial conditions lie in the perturbed equilibrium that asymptotically converges to the original equilibrium.}
\bigskip

The proof is based on the above-mentioned existence result for semi-infinite trajectories
for a pair of Legendrians. Furthermore, we provide a microscopic level interpretation for contact Hamiltonians governing the relaxation, see Section \ref{subsec-pG}.

\subsection{{Organization of the paper}}
The rest of the paper is organized as follows. In Section~\ref{sec-global} we introduce interlinking, a
 notion from contact dynamics which is crucial for our purposes, and present the main results
on existence of semi-infinite trajectories connecting a {robustly interlinked pair of} Legendrians. At the end of this section
we review the existence and structural stability of Legendrian global attractors $\Lambda$ of the contact Hamiltonian flows of contact Hamiltonians $H$ vanishing on $\Lambda$. (Such pairs $H,\Lambda$ play a central role in contact non-equilibrium thermodynamics).

Section~\ref{sec-therm} deals with applications to thermodynamics.

\section{Stability and interlinking}\label{sec-global}

\subsection{Attractors and repellers} \label{subsec-attractors-repellers}
We start with general preliminaries on attractors and repellers. Given a complete continuous flow $\{ \varphi^t\}$, $t\in\R$, on a topological space $Z$, a set $Y\subset Z$ and a compact subset $Q\subset \text{Closure}(Y)$, we say that $Q$ is a {\it local attractor in $Y$ of the flow $\{ \varphi^t\}$} if it is invariant under the flow and there exists a neighbourhood
$U$ of $Q$ with the following property: for every $x \in U\cap Y$ and for every neighbourhood $U'$
of $Q$ there exists $t_0$ such that $\varphi^t (x) \in U'$ for all $t>t_0$. Such a neighbourhood $U$ will be called an {\it attracting neighbourhood of $Q$ for $Y$ and $\{ \varphi^t\}$}. If the same holds
for all $t < -t_0$, the set $Q$ is called a {\it local repeller in $Y$ of the flow $\{\varphi^t\}$} and the corresponding neighbourhood $U$ is called a {\it repelling neighbourhood of $Q$ for $Y$ and $\{ \varphi^t\}$}. If an attracting/repelling neighbourhood $U$ can be chosen so that $Y\subset U$, we say that $Q$ is a {\it global attractor/repeller} in $Y$ of the flow $\{ \varphi^t\}$. A global attractor/repeller in $Z$ is simply called {\it attractor/repeller}.

\subsection{Interlinking} \label{subsec-interlink}
In what follows we work on a possibly non-compact contact manifold $\Sigma$ equipped with
a contact form $\lambda$ whose Reeb flow is complete. All contact Hamiltonians are assumed to be time-independent and to have a complete contact Hamiltonian flow (such contact Hamiltonians are called complete).

An {\it ordered} pair {$(\Lambda_0,\Lambda_1)$ of disjoint closed (i.e., compact without boundary) Legendrian submanifolds of $(\Sigma,\xi)$} is called {\it interlinked} (cf. \cite{EP-tetragons}) if there exists a constant $\mu= \mu(\Lambda_0,\Lambda_1,\lambda) >0$ such that every (complete) bounded strictly positive contact Hamiltonian $F$ on $\Sigma$ with $F \geq a >0$ possesses an orbit  of time-length $\leq \mu/a$ starting at $\Lambda_0$ and arriving at $\Lambda_1$.

\begin{rem}
{\rm
It is easy to see that if the pair $(\Lambda_0,\Lambda_1)$ is interlinked for some contact form $\lambda$ on $\Sigma$ whose Reeb flow is complete, then it is interlinked for any other such contact form $\lambda'$ defining the same coorientation of the contact structure as $\lambda$, provided the ratio $\lambda/\lambda'$ is bounded away from $0$ and $+\infty$.  At the same time, the constant $\mu(\Lambda_0,\Lambda_1,\lambda)$ does depend on $\lambda$.
}
\end{rem}
\bigskip

The interlinking is called {\it robust} if the same holds true, with possibly a different constant $\mu'$, for every pair of Legendrians $\Lambda'_0$ and $\Lambda'_1$ from sufficiently small $C^\infty$-neighbourhoods of $\Lambda_0$ and $\Lambda_1$, respectively. Existence of (robustly) interlinked pairs is a non-trivial
phenomenon detected by methods of ``hard" contact topology. {We refer to \cite{EP-Leg-pers-mod, EP-big} for a detailed discussion, as well as to \cite{EP-tetragons} for a related notion in symplectic Hamiltonian dynamics.}

\medskip

\begin{exam} [\cite{EP-Leg-pers-mod}, Theorem 1.5(i)]
\label{exam-1-jet-space-0-section-and-its-Reeb-shift-homol-bonded}
{\rm Let $\Sigma$ be the jet space $J^1 X = T^* X\times \R (z)$ of a closed manifold $X$
equipped with the standard contact form ( see Example~\ref{exam-onejet} above).  Let $\psi$ be a negative function on $X$, and let $\Lambda_0:= \{z=\psi(q),p=\psi'(q)\}$ be the graph of its 1-jet. Let $\Lambda_1$ be the zero section. Then the pair $(\Lambda_0,\Lambda_1)$ is interlinked.
}
\end{exam}

\begin{exam} [\cite{EP-Leg-pers-mod}, Theorem 1.5(ii)]
\label{exam-onechord}
{\rm Let $\Sigma = J^1X$ be as in the previous example. Let $\Lambda_0 \subset J^1 X$ be a  Legendrian submanifold with the following properties. First, $\Lambda_0$ is isotopic to the zero section $\Lambda_1$ through Legendrian submanifolds,
and second, there is a unique chord of the Reeb flow $R^t$ starting on $\Lambda_0$ and ending on  $\Lambda_1$. Denote the chord by $R^t x$, $t \in [0,\tau]$ with $x \in \Lambda_0$ and $y:= R^\tau x \in \Lambda_1$
Assume the following {\it non-degeneracy condition}:
\begin{equation}\label{eq-nondeg}
D_x R^\tau(T_x\Lambda_0) \oplus T_y \Lambda_1 = \xi_y\;,
\end{equation}
where $\xi_y$ stands for the contact hyperplane at $y$.
Then the pair $(\Lambda_0,\Lambda_1)$ is interlinked.
}
\end{exam}

\subsection{Asymptotic trajectories-1}\label{subsec-as-1}
Let $H$ be a (not necessarily bounded) contact Hamiltonian on a contact manifold $\Sigma$ equipped with a contact form $\lambda$ whose Reeb flow is complete.
Let a submanifold $M \subset \Sigma$ be the union of a finite number of the connected components of the
nodal set $\{H=0\}$ so that $M$ separates $\Sigma$ into two open parts, $\Sigma_-$ and $\Sigma_+$.
Assume that $H$ is strictly positive on $\Sigma_+$, but in general it is allowed to change sign on $\Sigma_-$. Recall that $R$ stands for the Reeb vector field of the contact form $\lambda$. Let $\Lambda_1 \subset M$  be a closed Legendrian submanifold. Let us make the following assumption.

\medskip
\noindent
{\bf Assumption $\clubsuit$:}
\begin{itemize}
\item [{(i)}] There exists $\kappa_1 > 0$ such that $dH(R) \leq 0$
on $\{0 < H <\kappa_1 \} \cap \Sigma_+$.
\item [{(ii)}] There exists $\kappa_2 > \kappa_1 > 0$ such that $dH(R) \leq 0$ on $\{H \geq \kappa_2\} \cap \Sigma_+$.
\item [{(iii)}] $dH(R)<0$ near $\Lambda_1$.
\item[{(iv)}] $\Lambda_1$ is a local attractor in $\Sigma_+ \cup \Lambda_1$ of the contact flow generated by $H$.
\end{itemize}

\medskip
\noindent
\begin{exam}\label{exam-infty} {\rm Consider $\Sigma= T^*X\times \R$ with the standard
contact form $dz-pdq$, where $X$ is a closed manifold.  Let $H: \Sigma \to \R$ be a function which equals $-cz$, $c >0$, near the zero section $\Lambda_1:= \{p=z=0\}$, and is bounded and positive on $\Sigma_+ := \{z <0\}$. {Let $M:=\{ z=0\}$.} Then assumption $\clubsuit$ holds.}
\end{exam}

\medskip
\noindent
\begin{rem}{\rm {Suppose that the contact vector field $v$ of $H$, when restricted to $M$, admits a local Lyapunov function near $\Lambda_1$},
i.e., there exists a non-negative smooth function $G$ on a neighbourhood $Z$ of $\Lambda_1$ in $M$ which vanishes on $\Lambda_1$ and which satisfies $dG(v) <-0$ on $Z \setminus \Lambda_1$. Together with assumption $\clubsuit$(iii) this
yields $\clubsuit$(iv). The proof is analogous to the one of Proposition~\ref{prop-local-atrep} below.
}
\end{rem}

\begin{thm}\label{thm-A}
Let $H: \Sigma \to \R$ be a contact Hamiltonian,
and let
$$\Lambda_1 \subset M \subset \{H=0\}$$
be a Legendrian submanifold, as above. Assume that
$\Sigma, H, M, \Lambda_1$ satisfy assumption
$\clubsuit$.
Let  $\Lambda_0$ be another closed Legendrian submanifold lying in the set $\{H \leq \kappa_2\}$, where $\kappa_2$ is the constant in $\clubsuit$(ii).
Assume that $(\Lambda_0, \Lambda_1)$  is robustly interlinked pair.
Then $\Lambda_0 \cap \Sigma_+ \neq \emptyset$, and there exists a trajectory of the contact Hamiltonian flow of $H$, fully lying in $\Sigma_+$, which starts on $\Lambda_0$ and asymptotically converges to $\Lambda_1$.
\end{thm}

\begin{proof}
Write $v$ for the contact vector field of $H$ and write $U$ for the local
attracting neighbourhood of $\Lambda_1$ in $\Sigma$.

Let $\Lambda'_1$ be the image of $\Lambda_1$ under the time-$\epsilon$ Reeb
flow, with $\epsilon < 0$ and $|\epsilon|$ small enough. Then, by $\clubsuit$ (iii), we have that  $\Lambda'_1 \subset U \cap \Sigma_+$,
and  the pair $(\Lambda_0,\Lambda'_1)$ is interlinked. Furthermore,
$H > b$ on $\Lambda'_1$ with some $0 <b <\kappa_1$,
where $\kappa_1$ is a constant from $\clubsuit$ (i).

It suffices to find a trajectory of the flow of $v$ connecting $\Lambda_0$ and $\Lambda'_1$ in $\Sigma_+$. Indeed, since $U$ is attracting, such a trajectory will necessarily  converge to $\Lambda_1$.

 To this end, consider a non-decreasing smooth function $u:\R\to\R$
with $u(s)=s$ for $s \in [b/2,3\kappa_2/2]$, $u(s)=b/4$ for $s \leq b/4$, and $u(s)=2\kappa_2$ for $s \geq 2\kappa_2$.
Define a new contact Hamiltonian $H'$ on $\Sigma$ by $H'= u(H)$ on $\Sigma_+$ and
$H'=b/4$ on $\Sigma_-$. Write $v'$ for the contact vector field of $H'$.

 By interlinking, there exists
a trajectory $\gamma$ of $v'$ connecting $\Lambda_0$ and $\Lambda'_1$.
Since $$dH'(v') = H'dH'(R) = H'\left(\frac{du}{ds}\circ H\right)dH(R) \leq 0$$
on $\{b/2 \leq H \leq b\} \cap \Sigma_+$, every trajectory of $v'$ starting in
$\{H = b/2\} \cap \Sigma_+$ cannot enter the domain $\{H \geq b\}$, and hence does not reach $\Lambda'_1$.  Similarly, no trajectory of $v'$ starting in $\{H \leq \kappa_2\}\cap \Sigma_+$ can exit to $\{H \geq 3\kappa_2/2\} \cap \Sigma_+$. It follows that
$\gamma$ lies fully in the set $\{b/2 \leq H \leq 3\kappa_2/2\} \cap \Sigma_+$, and hence is a trajectory of $H$. This proves all the statements of the theorem.
\end{proof}

\subsection{Asymptotic trajectories-2}
Let $H:\Sigma\to\R$ be a contact Hamiltonian and let $v$ be its Hamiltonian vector field.
Let $W$ be a connected component
of $\{H >0\}$ with smooth compact boundary $M$.
Recall that $R$ stands for the Reeb vector field.

Observe that $M$,  being the union of some connected components  of $\{H=0\}$, is invariant under the Hamiltonian flow $\{ \varphi^t\}$ of $H$. In the next theorem, we allow $dH(R)$ to change sign along $M$.
To this end we need the notion of the
{\it separating hypersurface} defined by
\[
\Gamma := \{\ x \in M \;:\; d_xH(R) = 0\ \}.
\]
Set
\[
M_-:= \{\ x \in M \;:\; d_xH(R) < 0\ \},\;\; M_+:= \{\ x \in M \;:\; d_xH(R) > 0\ \}.
\]

\medskip\noindent
{\bf Assumption $\spadesuit$:} $\Gamma$ is a smooth hypersurface; at the points of
$\Gamma$ the contact vector field $v$ is transversal to $\Gamma$ and points into $M_-$.

\medskip\noindent
\begin{question} Do there actually exist examples where $\Gamma$ is a smooth hypersurface and at the points of
$\Gamma$ the contact vector field $v$ is transversal to $\Gamma$ and points into $M_+$?
\end{question}

\begin{rem} {\rm
It would be interesting to compare assumption $\spadesuit$ to the notion of convexity
of compact hypersurfaces \cite{Giroux}. For instance, by Proposition 2.1 in \cite{Giroux},
$M$ is convex\footnote{We thank E.Giroux for bringing our attention to
convexity in this context.} when $\dim \Sigma =3$, $\dim M=2$.
}
\end{rem}

Set
\[
U^-_s := \varphi^s(M_-),\ s >0,
\]
\[
U^+_s := \varphi^s(M_+),\ s < 0,
\]
\[
Q_- := \bigcap_{s >0} U^-_s,
\]
\[
Q_+ := \bigcap_{s <0} U^+_s.
\]
Note that $Q_-\subset M_-$, $Q_+\subset M_+$.
We call the sets
$Q_-$, $Q_+$ the {\it cores} of $M_-$ and of $M_+$, respectively.

We claim that if assumption $\spadesuit$ holds, then $Q_-$ is {\it a global} attractor of the flow $\{ \varphi^s\}$ on $M \setminus Q_+$.

Indeed, the flow trajectory of any point in $M_-$ asymptotically converges to $Q_-$ (by the definition of $Q_-$ and compactness of $M$).
Also, the flow trajectory of any point in $M_+ \setminus Q_+$, by the definition of $Q_+$, arrives at a small neighbourhood of $\Gamma$ in $M_+$. Assumption $\spadesuit$ guarantees that it then crosses $\Gamma$ into $M_-$,
and hence, again by the definition of $Q_-$ and the compactness of $M$, asymptotically converges to $Q_-$. This proves the claim.

\begin{thm}
\label{thm-semimain} With $H$, $W$ and $M$ as above, suppose that assumption $\spadesuit$ holds.
Assume furthermore that $Q_- := \Lambda$ is a closed Legendrian submanifold.

Then  for every closed Legendrian $K \subset W$,  such that the pair $(K,\Lambda)$ is robustly interlinked, there exists a semi-infinite trajectory of the contact Hamiltonian flow $\{ \varphi^t\}$ of $H$ starting on $K$ and asymptotically converging to $\Lambda$.
\end{thm}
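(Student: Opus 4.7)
My plan is to follow the proof of Theorem~\ref{thm-A} almost verbatim, with the essential new input being the identity
\[
dH(v) \;=\; H \cdot dH(R),
\]
which follows directly from $\lambda(v)=H$ and the contact condition, together with assumption $\spadesuit$. As a first step I would upgrade the attractor property of $\Lambda = Q_-$ from $M\setminus Q_+$ to the one-sided neighborhood $W \cup \Lambda$. Two observations drive this: first, the identity above forces $W$ to be forward-invariant under $\{\varphi^t\}$, since $\dot{H}\circ\gamma=H(\gamma)\cdot dH(R)(\gamma)$ is a linear scalar ODE and $H$ cannot change sign along $v$-orbits. Second, because $\Lambda \subset M_-$, continuity gives a neighborhood of $\Lambda$ in $\Sigma$ on which $dH(R) \leq -\delta < 0$, and combining this with the identity yields exponential decay of $H$ along $v$-orbits that stay close to $\Lambda$ inside $W$. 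Together with the global attractor property of $\Lambda$ for $\{\varphi^t\}|_{M\setminus Q_+}$ (established just before the theorem from $\spadesuit$), a standard shadowing argument produces an attracting neighborhood $U \supset \Lambda$ in $\Sigma$ such that every $v$-orbit starting in $U \cap (W\cup\Lambda)$ converges to $\Lambda$ as $t\to+\infty$.

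Using $dH(R) < 0$ on $\Lambda$, I would push $\Lambda$ into $W$ by the Reeb flow: set $\Lambda' := R^{\epsilon}(\Lambda)$ for a small $\epsilon < 0$. Then $H \geq b$ on $\Lambda'$ for some $b>0$, $\Lambda'\subset U\cap W$, and by robustness of the interlinking hypothesis the pair $(K, \Lambda')$ is still interlinked. As in the proof of Theorem~\ref{thm-A}, I choose a non-decreasing smooth cut-off $u:\R\to\R$ with $u(s)=s$ on $[b/2, 2C]$ (where $C>\max_K H$), $u=b/4$ for $s\leq b/4$, and $u=2C$ for $s\geq 2C$, and form $H' := u(H)$ on a neighborhood of $\bar W$, extended by $H'\equiv b/4$ elsewhere. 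This $H'$ is a smooth, strictly positive, bounded, complete contact Hamiltonian, and a direct computation shows that its contact vector field $v'$ coincides with $v$ throughout the open set $W \cap \{b/2 \leq H \leq 3C/2\}$. Interlinking of $(K,\Lambda')$ with respect to $H'$ then delivers a trajectory $\gamma$ of $v'$ of bounded time-length connecting $K$ and $\Lambda'$.

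The remaining task is to confine $\gamma$ to the region $W \cap \{b/2\leq H\leq 3C/2\}$, because then $\gamma$ is in fact a $v$-trajectory ending at $\Lambda' \subset U$, and its forward continuation by $\{\varphi^t\}$ converges to $\Lambda$ by the first step. This is the step I expect to be the main obstacle: in Theorem~\ref{thm-A}, confinement was produced by the one-sided monotonicity $dH(R)\leq 0$ near the cut-off levels guaranteed by $\clubsuit$(i),(ii), but in the present setting that monotonicity is not available on all of $W$. My plan is to exploit the forward-invariance of $W$ under $v$ established in the first step (so once $\gamma$ enters $W$ it cannot cross $M$, and can only exit the good strip through the level sets $\{H=b/2\}$ or $\{H=3C/2\}$ from inside $W$) together with assumption $\spadesuit$ and the one-way transversality of $v$ across $\Gamma$ near $\Lambda$. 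A careful analysis of the $v'$-dynamics on the transition regions of $u$, combined with the fact that $\Lambda'$ lies in the core $Q_-$-neighborhood where any orbit that leaves it must pass through $\Gamma$ in the wrong direction, should rule out a return of $\gamma$ to $\Lambda'$ after an escape from the good strip, completing the proof.
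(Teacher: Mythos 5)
Your opening move is sound and matches the paper's Proposition~\ref{prop-local-atrep} in spirit: from assumption~$\spadesuit$ one does get that the core $Q_-$ is a local attractor and $Q_+$ a local repeller \emph{in $\Sigma$}, not merely in $M$. But the attempt to finish by a Theorem~\ref{thm-A}-style confinement argument has a genuine gap, and the paper's actual proof takes a structurally different route precisely because that confinement is unavailable here.

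The confinement step fails for two concrete reasons. First, the forward-invariance of $W$ holds for the vector field $v$ of $H$, not for the vector field $v'$ of $H'=u(H)$. A direct computation gives $dH(v')=H'\,dH(R)$, and since $H'>0$ everywhere (including on $M$, where $H'=b/4$), a $v'$-trajectory crossing $M$ at a point of $M_-$ has $\dot H = (b/4)\,dH(R)<0$ there, i.e.\ it exits $W$ across $M$. So ``once $\gamma$ enters $W$ it cannot cross $M$'' is simply false for $\gamma$ a $v'$-orbit. Second, and more fundamentally, the step in Theorem~\ref{thm-A} that pins the $v'$-orbit inside the strip $\{b/2\le H\le 3\kappa_2/2\}$ uses $\clubsuit$(i) and $\clubsuit$(ii), which impose $dH(R)\le 0$ on the relevant shells in $\Sigma_+$. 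No such hypothesis appears in Theorem~\ref{thm-semimain}: assumption~$\spadesuit$ only controls the sign of $dH(R)$ \emph{on $M$} (through the hypersurface $\Gamma$), and says nothing about the level sets $\{H=b/2\}$ or $\{H=3C/2\}$ inside $W$. Your closing appeal to ``the one-way transversality of $v$ across $\Gamma$'' is about dynamics on $M$, not about $v'$-dynamics in the open region $W$, and does not rule out an escape through those level sets.

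The paper avoids confinement entirely. Rather than one modified Hamiltonian, it works with a one-parameter family $H_\delta$: modify $H$ only outside $\{H\ge\delta\}$ and take $\delta\to 0$. For each $\delta$, interlinking gives a chord of $v_\delta$ from $K$ to $\Lambda'$; if it stays in $\{H>\delta\}$ we are done (it is a $v$-orbit passing through the attracting neighbourhood of $\Lambda$). Otherwise it exits through $M_\delta:=\partial(\{H>\delta\}\cap W)$; the exiting pieces $\gamma_\delta$ lie entirely in $\{H\ge\delta\}$ and are genuine $v$-trajectory arcs. Passing to a subsequence as $\delta\to 0$, one obtains a semi-infinite $v$-orbit $\gamma$ starting on $K$ whose $\omega$-limit set $A$ is a nonempty compact invariant subset of $M$. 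Proposition~\ref{prop-local-atrep} rules out $A\subset Q_+$ (since $Q_+$ is a local \emph{repeller} in $\Sigma$), and then the global-attractor property of $Q_-=\Lambda$ on $M\setminus Q_+$ forces the closure of $A$ to meet $\Lambda$; hence $\gamma$ enters the attracting neighbourhood of $\Lambda$ and converges. This limiting-plus-$\omega$-limit-set mechanism is the missing idea in your write-up, and it is what makes the theorem go through without any $\clubsuit$-type sign control away from $M$.
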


\begin{rem}
\label{rem-attractors-exist-hyperbolicity}
{\rm
The situations where a global attractor of a contact flow is a closed Legendrian submanifold do exist -- in Section~\ref{sec-loc} we show how the existence and structural stability of such attractors can be proved using the theory of hyperbolic dynamical systems.
}
\end{rem}
\bigskip

As a preparation for the proof of Theorem~\ref{thm-semimain}, we need the following result.

\begin{prop}\label{prop-local-atrep}
 The cores $Q_-$ and $Q_+$ are a local attractor and a local repeller of the flow
$\{\varphi^t\}$ on $\Sigma$.
\end{prop}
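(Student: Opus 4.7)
The plan is to construct a strict Lyapunov function for the contact flow $\{\varphi^t\}$ on a $\Sigma$-neighborhood of $Q_-$ by combining a Lyapunov function for the restricted flow on $M$ with the squared Hamiltonian $H^2$ playing the role of transverse control. The starting observation is the basic contact-Hamiltonian identity
\[
dH(v) \;=\; H\cdot dH(R),
\]
which follows at once from $\lambda(v)=H$ and the characterization $\iota_v d\lambda = dH(R)\,\lambda - dH$ of the contact vector field. Hence along any trajectory
\[
\frac{d}{dt}H^2 \;=\; 2\,H^2\cdot dH(R).
\]
Since $Q_-\subset M_-$ is compact and $dH(R)<0$ on $M_-$, continuity yields a constant $c>0$ and a neighborhood $\cN$ of $Q_-$ in $\Sigma$ on which $dH(R)\le -c$. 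Thus $H^2$ decays exponentially along any arc of trajectory confined to $\cN$, giving strong transverse attraction to $M$ near $Q_-$.

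Second, the discussion immediately preceding the proposition shows that $Q_-$ is a compact asymptotic attractor of the restriction of the flow to $M$, with basin containing $M_-$. By the classical smooth Lyapunov theorem for compact asymptotic attractors one obtains a $C^\infty$ function $F\ge 0$ on a neighborhood $V$ of $Q_-$ in $M$, with $F^{-1}(0)=Q_-$, compact sublevel sets shrinking to $Q_-$, and $dF(v|_M)<0$ on $V\setminus Q_-$ (one may even arrange the quantitative estimate $dF(v|_M)\le -\epsilon F$ by the standard averaging-along-orbits construction). Extend $F$ smoothly to $\tilde F$ on a tubular $\Sigma$-neighborhood of $V$, for instance by pullback along the nearest-point projection $\pi$ to $M$ in an auxiliary Riemannian metric. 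Since $v$ is tangent to $M$ along $M$, the discrepancy $d\tilde F(v)-dF(v|_M)\circ\pi$ vanishes on $M$ and is therefore of order $|H|$ nearby.

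Third, set
\[
G \;:=\; \tilde F + \mu H^2
\]
for a large constant $\mu>0$. Then $G\ge 0$ with $G^{-1}(0)=Q_-$, and a direct computation gives
\[
dG(v) \;=\; d\tilde F(v) + 2\mu H^2\cdot dH(R) \;\le\; d\tilde F(v) - 2c\mu H^2
\]
on $\cN$. On $M\setminus Q_-$ the first term equals $dF(v|_M)<0$ and the second vanishes. Off $M$ and sufficiently close to $Q_-$, choosing $\mu$ large enough and shrinking to a smaller neighborhood $\cN'\subset\cN$, the negative term $-2c\mu H^2$ dominates the $O(|H|)$ perturbation in $d\tilde F(v)$ coming from the extension, while on the $M$-direction $dF(v|_M)\circ\pi$ remains strictly negative. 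Hence $dG(v)<0$ on $\cN'\setminus Q_-$, and standard Lyapunov theory implies that compact sublevel sets $\{G\le\eta\}$ for small $\eta>0$ are forward-invariant attracting neighborhoods of $Q_-$ in $\Sigma$; this establishes the local-attractor claim. The repeller assertion for $Q_+$ then follows by applying the same argument to the reverse-time flow, under which $M_+$ and $M_-$ exchange roles and $Q_+$ takes the place of $Q_-$.

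The main obstacle is verifying the Lyapunov inequality $dG(v)<0$ in the regime where both $d\tilde F(v)$ and $H$ are simultaneously small near $Q_-$: the $O(|H|)$ error from extending $F$ off $M$ must be balanced against both the negative in-$M$ contribution $dF(v|_M)\circ\pi$ and the transverse decay $-2c\mu H^2$. Having $F$ satisfy a quantitative decay $dF(v|_M)\le -\epsilon F$, together with a sufficiently large $\mu$, makes this balancing uniform over $\cN'$ and completes the argument.
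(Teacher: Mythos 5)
Your proposal takes a genuinely different route from the paper. The paper's proof avoids Lyapunov functions altogether: it uses $dH(v)=H\,dH(R)$ to show $H\to 0$ along forward orbits near $Q_-$, then builds an \emph{isolating block} $\Pi = U_s\times(-\epsilon,\epsilon)$ (using $H$ as a transverse coordinate, the sets $U_s=\varphi^s(M_-)$ and assumption~$\spadesuit$ to get transversality of $v$ to $\partial\Pi$), and concludes by an $\omega$-limit set argument that the accumulation set of any forward orbit in $\Pi$ lies in $Q_-$. No smooth Lyapunov function on $M$ is ever invoked.

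Your Lyapunov construction, as written, has a real gap at the point you yourself flag. After writing $dG(v)\le dF(v|_M)\circ\pi + E - 2c\mu H^2$ with $E=d\tilde F(v)-dF(v|_M)\circ\pi$, you claim that "$-2c\mu H^2$ dominates the $O(|H|)$ perturbation." This is false in the relevant regime: with only $|E|\le C_1|H|$, at a point $x$ with $\pi(x)\in Q_-$ (so $F\circ\pi=0$ and $dF(v|_M)\circ\pi=0$) and $0<|H|<C_1/(2c\mu)$, one gets $dG(v)\le C_1|H|-2c\mu H^2>0$ no matter how large $\mu$ is. The quadratic term cannot beat a linear one near $M$; the in-$M$ term contributes nothing on the fiber over $Q_-$; and taking $\mu\to\infty$ only shrinks the bad region without eliminating it. The quantitative estimate $dF(v|_M)\le -\epsilon F$ does not rescue this, because that estimate is vacuous where $F\circ\pi=0$.

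There is a plausible repair, but it needs two ingredients you did not supply. First, the extension error is actually the sharper $E = dF_{\pi(x)}\bigl(d\pi(v_x)-v_{\pi(x)}\bigr)$, hence $|E|\le C_0|H|\,\|dF\circ\pi\|$ rather than merely $O(|H|)$. Second, a nonnegative $C^2$ function with bounded Hessian on a compact set satisfies $\|dF\|^2\le 2\|D^2F\|_\infty\, F$ (Taylor), so $|E|\lesssim |H|\sqrt{F\circ\pi}$, and then an AM--GM split against $-\epsilon F\circ\pi - 2c\mu H^2$ closes the estimate for $\mu$ large. But this repair still leans on the exponential rate $dF(v|_M)\le -\epsilon F$, and your assertion that this "may be arranged by the standard averaging-along-orbits construction" is not correct in general: a compact asymptotic attractor (which is all the discussion preceding the proposition gives you for $Q_-$) need not be exponentially attracting (think of $\dot x=-x^3$), and no smooth Lyapunov function for such an attractor can satisfy $\dot F\le -\epsilon F$. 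The paper's isolating-block argument sidesteps both of these quantitative issues entirely, which is precisely what makes it the right tool here.
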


Let us emphasize that the point of the proposition is that the property of being a local attractor/repeller
holds {\sl in the ambient manifold $\Sigma$}.

\bigskip
\noindent
{\bf Proof of Proposition~\ref{prop-local-atrep}:} We prove this for the negative core $Q=Q_-$.
The core $Q$ is compact, and hence we can assume that $dH(R) \leq -c <0$ on $Q$ for some $c>0$.
In particular, $dH \neq 0$ in a neighbourhood of $Q$. Fix an auxiliary Riemannian metric
on $\Sigma$. \color{black} Write $\nabla$ for the gradient with respect to this metric,
and $|\cdot|$ for the Riemannian length of a tangent vector. \color{black}
Using
the flow of $\nabla H/|\nabla H|^2$, we identify a neighbourhood of $Q$ with $V:= U^-_{s_0} \times (-\epsilon_0,\epsilon_0)$, for some $\epsilon_0>0$ and a sufficiently large $s_0>0$. Here the Hamiltonian $H$ becomes the projection to the second factor.

Note that
\begin{equation}\label{eq-decay}
dH(v) = H dH(R)\;.
\end{equation}
There exists a sufficiently large $s >s_0$ and a sufficiently small $\epsilon \in (0,\epsilon_0)$ so that $v$ is transversal to the boundary of $\Pi \subset V$ for $\Pi:= U^-_s \times (-\epsilon,\epsilon)$ and points inside the domain $\Pi$
-- because of \eqref{eq-decay} and because the boundary of $U^-_s$ is the image of $\Gamma$ under $\varphi^s$ and $v$ is transversal to $\Gamma$ and points into $M_-$ (by assumption $\spadesuit$).
Thus $\Pi$ is an ``isolating block": every trajectory of $\{\varphi^t\}$ starting in $\Pi$ remains inside $\Pi$ as $t >0$. Hence, by \eqref{eq-decay} we get that for any $x_0\in \Pi$
\begin{equation} \label{eq-decay-1}
H(\varphi^t(x_0)) \to 0 \;\;\text{as}\;\; t \to +\infty\;.
\end{equation}
Take any $x_0 \in \Pi$ and consider the set $A$ of points $x \in \Pi$ such that $x = \lim \varphi^{t_i}(x_0)$ for some sequence of times $t_i \to +\infty$.
We claim that $A \subset Q$.
Indeed, take any $x \in A$. By \eqref{eq-decay-1}, we conclude that $H(x)=0$, so that $x \in U^-_s$.
Assume, by contradiction, that $x \notin Q$. Then we can choose a smaller isolated block $\Pi'$ whose
closure is contained in $\Pi$, and such that $x \notin \Pi'$. Since $Q$ is an attractor of $\{\varphi^t\}$ in $U^-_s$, there exists a neighbourhood $Z$ of $x$ disjoint from $\Pi'$,  and $T >0$ such that $\varphi^T(Z) \subset \Pi'$. Since $\Pi'$ is an isolated block, $\varphi^t(Z) \subset \Pi'$ for any $t \geq T$. It follows that if $\varphi^t (x_0) \in Z$ for some $t\geq T$, then $\varphi^r (x_0) \notin Z$ for all $r > t+T$, and hence $x \notin A$. We get a contradiction.

Therefore $A\subset Q$. This implies that the trajectory $\varphi^t (x_0)$ asymptotically converges to $Q$ -- indeed, otherwise, by the compactness of $\text{Closure}(\Pi)$, we would get a sequence $t_i \to +\infty$ such that the sequence $\{ \varphi^{t_i}(x_0)\}$ has a limit that does not lie in $Q$, in contradiction to $A\subset Q$.

This shows that $Q$ is a local attractor in $\Sigma$ of the flow $\{\varphi^t\}$.
\Qed

\bigskip\noindent
{\bf Proof of Theorem~\ref{thm-semimain}: }
Perturb $\Lambda$ inside the nodal domain $W$ of $H$ by the (reversed) Reeb flow.
Denote the obtained Legendrian by $\Lambda'$. We can assume that $\Lambda'$ is contained in
the attracting neighbourhood of $\Lambda$ built in Proposition~\ref{prop-local-atrep}.

 Choose $\delta >0$ small enough such that
$\Lambda' \subset \{H > \delta\}$. Modify $H$ outside $\{H \geq \delta\}$ to a positive Hamiltonian $H_\delta$ bounded away from zero. Denote by $M_\delta$ the boundary of $\{H > \delta\} \cap W$.

By interlinking, there exists a trajectory of the contact Hamiltonian flow of $H_\delta$ connecting $K$ with $\Lambda'$.
If it contained in $\{H > \delta\}$ we are done: this trajectory is a trajectory of the contact Hamiltonian flow of $H$, and it asymptotically converges
to $\Lambda$ as it passes through the attracting neighbourhood. If not, it must exit the
domain $\{H > \delta\}$ through $M_\delta$. Denote the obtained piece of trajectory lying
fully in $\{H \geq \delta\}$ by $\gamma_\delta$. Passing to a subsequence of $\{\gamma_\delta\}$ as $\delta \to 0$, and remembering that $M$ is compact and invariant under
$\{ \varphi^t\}$, we get a semi-infinite trajectory $\gamma:= \{\varphi^t x_0\}_{t\geq 0}$, $x_0 \in K$,  which
has limit points in $M$. The set $A \subset M$ of these limit points is invariant under $\{\varphi^t\}$. Note that, by Proposition~\ref{prop-local-atrep}, the core $Q_+$  is a local repeller in $\Sigma$. Thus,
$A \subset M \setminus Q_+$.
But since $\Lambda$ is the global attractor of $\{\varphi^t\}$ in  $M \setminus Q_+$,
the closure of $A$ intersects $\Lambda$. Therefore, our trajectory $\gamma$ must enter the attracting neighbourhood of $\Lambda$. Hence it converges to $\Lambda$,
and the proof is complete.
\qed

\begin{exam}[{\bf Contact M\"{o}bius dynamics}]
\label{exam-contact-mobius-dynamics}
{\rm
Consider the contact manifold $J^1 \SP^1 = \R^2(z,p) \times \SP^1(q)$ equipped
with the contact form $\lambda := dz-pdq$. We start our discussion with the analysis
of the contact Hamiltonian
\[
F(p,q,z) = z^2+p^2 -1.
\]
The contact vector field generated by $F$ is
\[
v=2pz \frac{\partial}{\partial p} - 2p \frac{\partial}{\partial q} + (z^2 - p^2 - 1) \frac{\partial}{\partial z}.
\]
The corresponding contact Hamiltonian system is integrable, and its dynamics admits a
simple geometric description. We equip $\R^2(z,p)$ with a complex coordinate $w= z+ip$, identifying it with $\C$. Replace the contact form $\lambda$ by $p^{-1}\lambda = -dq+ p^{-1}dz$.
This form has a singularity at the real line $\ell:= \{p=0\}$. The line $\ell$ splits $\C$
into the upper half-plane $\mathbb{H}^+$ and the lower half plane $\mathbb{H}^-$.
Both half-planes are equipped with the hyperbolic area form $\Omega:= p^{-2}dz \wedge dp$.
 With this language
the contact manifold $J^1 \SP^1\setminus \{ p=0\}$ can be considered as the prequantization of $(\C\setminus\ell, \Omega)$.

The group $PSL(2,\R)$ acts on $\C$ by hyperbolic isometries of both half-planes. In particular, it preserves the singular line $\ell$ and the form $\Omega$. One can check that the action lifts, in the standard manner, to the prequantization $J^1 \SP^1\setminus \{ p=0\}$. (The latter
statement requires a calculation: one has to verify that for every $g \in PSL(2,\R)$
the form $g^*(p^{-1}dz) - p^{-1}dz$ is regular everywhere except for the point $g^{-1}(\infty)$.
We leave the verification to the reader).

With this preliminaries, the contact flow of the contact Hamiltonian $F$
coincides with the contact lift of a one-parameter subgroup of M\"{o}bius transformations
having an unstable fixed point at $1$ and a stable fixed point at $-1$, see Figure 1.

\begin{figure}[h]
\label{fig-1}
\caption{}
\includegraphics[width=0.5\textwidth]{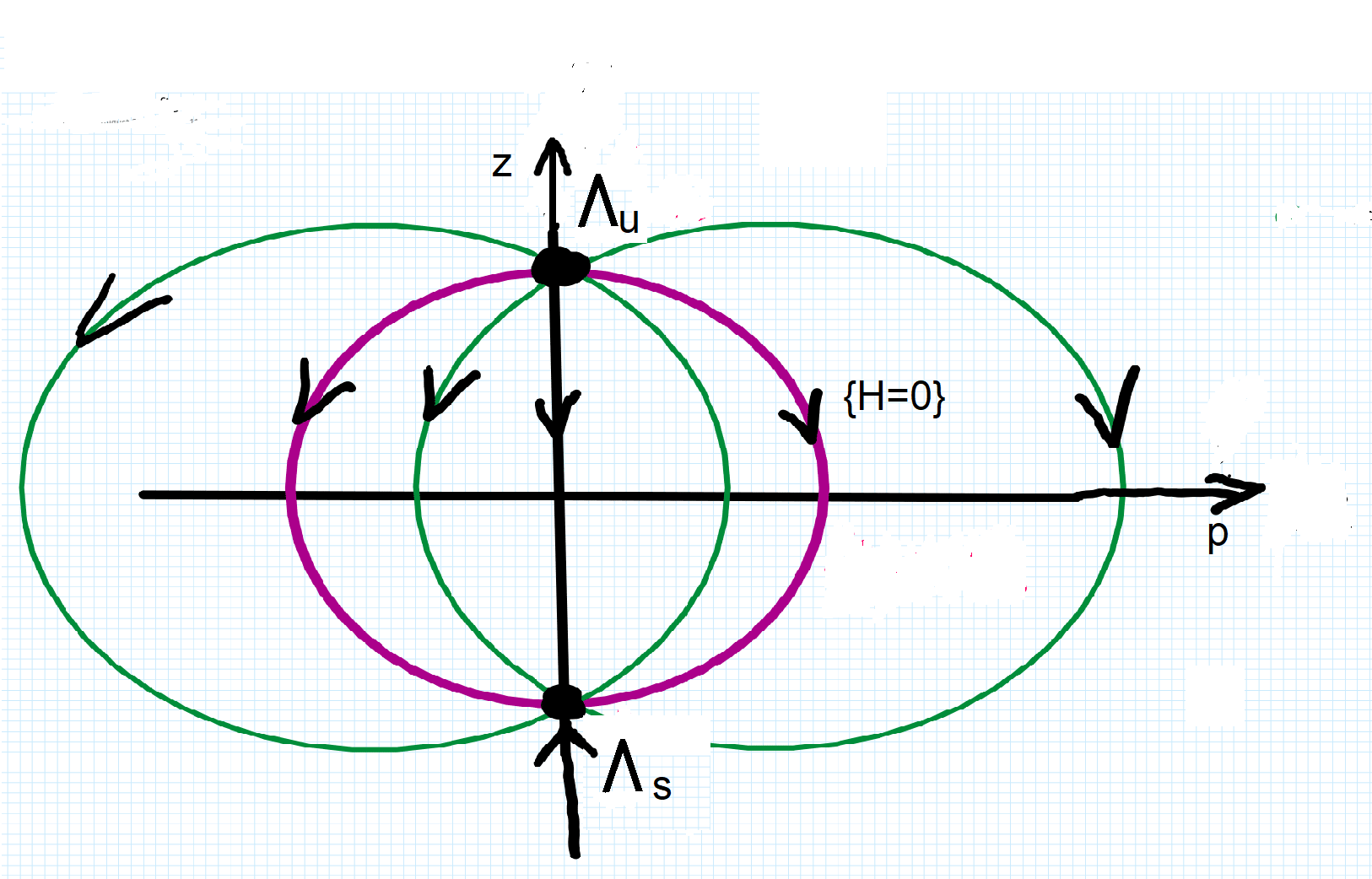}
\centering
\end{figure}

A calculation (that we leave to the reader) yields:
\[
w(t) = \frac{ w(0)\cosh t - \sinh t}{-w(0)\sinh t + \cosh t}.
\]

However, there is a problem: the latter trajectories may escape
to infinity in finite time, and therefore the contact Hamiltonian flow of $F$ is incomplete.

We shall remedy this by introducing an appropriate cut-off and replacing $F$ with a new complete Hamiltonian $H$. Fix a smooth function
$a :[0,+\infty) \to [-1,+\infty)$ so that $a'(s)>0$ for all $s$,
$\lim_{s \to +\infty}a(s)= a_\infty >1$ and $a(s)=s-1$ for all
$s \in [0,1+\epsilon]$ for some $\epsilon >0$. One readily checks that the contact Hamiltonian
flow of $H(p,q,z) := {a} (p^2+z^2)$ is complete, and the dynamics of $H$ coincides with the
one of $F$ (described above) on the solid torus $\{p^2+z^2 < 1+\epsilon\}$, see Figure 2. Moreover, $H$ is bounded from above.

\begin{figure}[h]
\label{fig-2}
\caption{}
\includegraphics[width=0.5\textwidth]{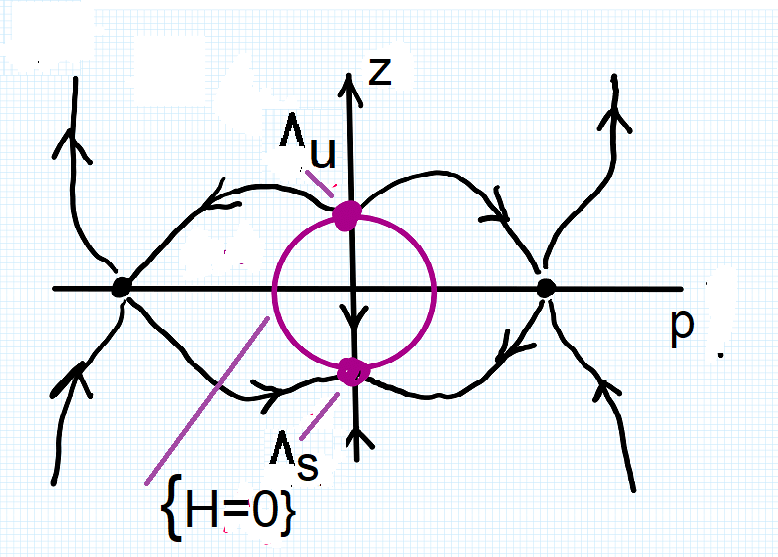}
\centering
\end{figure}

Let us observe that the 2-torus $\{ F = 0\} = \{ H = 0\}$ is convex in the sense of contact topology \cite{Giroux}: it is transverse to the contact vector field $z \partial/\partial z +
p\partial/\partial p$. This field is tangent to the contact structure along
the circles $\{p={\pm 1,z= 0}\}$ which split the torus into two annuli. Each of these annuli
is Liouville with respect to the symplectic structure $dz \wedge dp$, with the Liouville
vector field being parallel to the characteristic foliation. The Legendrian curves
$\Lambda_{\rm s} := \{ {p=0,z=-1}\}$ and $\Lambda_{\rm u} := \{{p=0, z= 1}\}$ are Lagrangian skeleta
of the annuli (here ${\rm s}$ and ${\rm u}$ stand for stable and unstable, respectively).

Note that $\Lambda_{\rm s}$ is a local attractor of the contact Hamiltonian flow of $H$. In fact, all the trajectories of the flow, except those departing from $\Lambda_{\rm u}$ in the direction of the $z$-axis,
asymptotically converge to $\Lambda_{\rm s}$.

Now we are ready to illustrate the statement of Theorem~\ref{thm-semimain}.
The Legendrian submanifolds $K_c:= \{p=0, z= c\}$ lie in $W= \{p^2+z^2 > 1\}$ whenever
$|c| >1$. By an obvious modification of Example~\ref{exam-1-jet-space-0-section-and-its-Reeb-shift-homol-bonded},
the pair of Legendrian submanifolds $(K_c, \Lambda_{\rm s})$ is interlinked
if $c <-1$, and is not interlinked if $c >1$.
In the former case, $K_c$ lies in the attracting neighbourhood of $\Lambda_{\rm s}$ for the contact Hamiltonian flow of $H$, and in the latter
case no trajectory of the flow starting on $K_c$ converges asymptotically to $\Lambda_{\rm s}$.
}
\end{exam}

\begin{rem} {\rm
The form $\Omega:= p^{-2}dz \wedge dp$ can be considered as a {\it $b$-symplectic form} in the sense of Miranda and Oms \cite{MO}. An intriguing dynamical object arising in $b$-contact topology is {\it an escape orbit} -- that is, an infinite Reeb orbit converging to
the singularity, see a recent paper \cite{MOP} by Miranda, Oms and Peralta-Salas. It would be interesting to explore a possible link between escape orbits and the asymptotic orbits appearing in our story.}
\end{rem}

\subsection{Local and structural stability}
\label{sec-loc}
Here we discuss the existence and structural stability of Legendrian global attractors of contact flows mentioned in Remark~\ref{rem-attractors-exist-hyperbolicity} above.

The hyperbolic theory of dynamical systems enables one to detect existence of attractors
in terms of local data. Let us discuss this briefly in the context of contact Hamiltonian
flows. Let $(\Sigma^{2n+1},\xi)$ be a contact manifold with a contact form $\lambda$ and
the corresponding Reeb vector field $R$. Let $H: \Sigma \to \R$ be a contact Hamiltonian
having $0$ as a regular value. Recall that the hypersurface $M:= \{H=0\} \subset \Sigma$ is invariant under the contact Hamiltonian flow $\{\varphi^t\}$ of $H$. Let $\Lambda \subset M$
be a smooth closed $n$-dimensional submanifold invariant under $\{\varphi^t\}$.

Let us introduce an auxiliary Riemannian metric on $M$ near $\Lambda$. The tangent bundle to $M$  along $\Lambda$ splits into the direct sum  $T_\Lambda M = T\Lambda \oplus N\Lambda$ of the tangent and normal bundles to $\Lambda$. Denote by $\pi: T_\Lambda M \to N\Lambda$ and $\tau: T_\Lambda M \to T\Lambda$  the corresponding projections.

\bigskip

We say that $\Lambda$ is $(a,b)$-{\it normally hyperbolic} invariant subset of $\{\varphi^t\}$ on $M$ with $a > |b| >0$, if $||\pi D_x\varphi^t||  \leq Ce^{-at}$ and $||\tau D_x\varphi^t||  \leq Ce^{bt}$ for some $C >0$ and all $x \in \Lambda$. \color{black} Here $||\cdot||$ stands for the
operator norm with respect to the Riemannian metric. \color{black}
By Theorem 4 of \cite{Fenichel}, such an invariant submanifold is a local attractor of $\{\varphi^t\}$ on $M$.

\bigskip

Further, if $\Lambda$ is $(a,b)$-{\it normally hyperbolic} for every $b>0$, then
$\Lambda$ is $C^\infty$-{\it structurally stable}. This means that for every $\epsilon > 0$ there exists a $C^2$-neigbourhood $\cU$ of $H$ (in the strong Whitney topology)
such that for every $H' \in \cU$ the hypersurface $M' = \{H'=0\}$ contains an $(a',b')$-normally hyperbolic invariant submanifold $\Lambda'$ smoothly diffeomorphic to $\Lambda$ with $a' > a-\epsilon$ and $|b'| < \epsilon$. This follows from Theorems 1 and 2
in \cite{Fenichel}. In fact, the proof in \cite{Fenichel} shows that $\Lambda'$ is $C^\infty$-close to $\Lambda$. This yields, by \cite{Fenichel},
that $\Lambda'$ itself is $C^r$-structurally stable with some large, albeit finite, $r$.

\bigskip
{\bf Warning:} $\Lambda'$ is {\bf not} normally hyperbolic
with every $b >0$, but with some positive $b' <\epsilon$.

\bigskip

Assume that $\Lambda$ is $(a,b)$-normally hyperbolic invariant subset of $\{\varphi^t\}$ on $M$. Impose an extra condition on $H$ at $\Lambda$: for some $c>0$,
\begin{equation}
\label{eq-H-R}
d_xH(R) \leq -c <0 \;\; \forall x \in \Lambda.
\end{equation}

Observe that this condition guarantees that $R$ is transversal to $M$ near $\Lambda$.
We shall assume without loss of generaility that our auxiliary Riemannian metric is extended to a neighbourhood of $\Lambda$ in the whole $\Sigma$,  and that $R$ is orthogonal to $M$ near $\Lambda$.
Furthermore, writing $v$ for the vector field of $\{\varphi^t\}$ on $\Sigma$, we have
\begin{equation}\label{eq-lambda-H-R}
L_v\lambda = dH(R)\lambda,
\end{equation}
and hence one readily concludes that assumption \eqref{eq-H-R}
yields $(d,b)$-normal hyperbolicity of $\Lambda$ in $\Sigma$ with $d= \min(a,c)$,
provided $|b| < c$.

\begin{thm} If $|b| < c$, the submanifold $\Lambda$ is Legendrian.
\end{thm}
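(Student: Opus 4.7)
Since $\dim \Lambda = n$ in a $(2n+1)$-dimensional contact manifold, to show $\Lambda$ is Legendrian it suffices to prove $\lambda|_{T\Lambda} = 0$, i.e.\ that for every $x \in \Lambda$ and every $w \in T_x\Lambda$ we have $\lambda_x(w) = 0$. The plan is to exploit the interplay between equation \eqref{eq-lambda-H-R}, which controls how $\lambda$ evolves along the flow, and the normal hyperbolicity bounds, which control how tangent vectors to $\Lambda$ evolve. These two pieces of information pull in opposite directions on an infinite half-line of time, and the assumption $|b|<c$ makes the contradiction possible unless $\lambda_x(w)=0$.

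Concretely, fix $x \in \Lambda$ and $w \in T_x\Lambda$, and set
\[
\alpha(t) := \lambda_{\varphi^t(x)}\bigl(D_x\varphi^t\, w\bigr).
\]
Since $\Lambda$ is $\{\varphi^t\}$-invariant, $D_x\varphi^t\, w \in T_{\varphi^t(x)}\Lambda$ for all $t$. The Cartan-type identity $L_v\lambda = dH(R)\,\lambda$ from \eqref{eq-lambda-H-R}, applied along the orbit, yields the linear ODE
\[
\alpha'(t) \;=\; dH(R)\big|_{\varphi^t(x)}\cdot \alpha(t),
\qquad
\alpha(t) \;=\; \alpha(0)\,\exp\!\Bigl(\int_0^t dH(R)\big|_{\varphi^s(x)}\,ds\Bigr).
\]
By assumption \eqref{eq-H-R} we have $dH(R)\leq -c$ on $\Lambda$, and $\varphi^s(x)\in\Lambda$ for all $s$, so the integral in the exponent is $\leq -ct$ for $t>0$ and $\geq c|t|$ for $t<0$. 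In particular, as $t\to -\infty$,
\[
|\alpha(t)| \;\geq\; |\alpha(0)|\,e^{c|t|}.
\]

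On the other hand, $(a,b)$-normal hyperbolicity of $\Lambda$ gives a bound $\|\tau D_x\varphi^t\|\leq Ce^{b|t|}$ on the tangential derivative for all $t \in \R$ (compactness of $\Lambda$ and the standard Fenichel theory supply the two-sided bound; the case $t<0$ can also be obtained by reverse-time analysis of the forward trajectory from $\varphi^t(x)$). Since $\lambda$ is a smooth form and $\Lambda$ is compact, $|\lambda_y(\cdot)|$ is bounded by a constant on a neighborhood of $\Lambda$, so
\[
|\alpha(t)| \;\leq\; C'\,e^{|b|\,|t|}\,\|w\|
\qquad \text{for all } t\in\R.
\]
Combining the two estimates as $t\to -\infty$ gives $|\alpha(0)| \leq C'\|w\|\,e^{(|b|-c)|t|} \to 0$, forcing $\alpha(0)=\lambda_x(w)=0$. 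Since $w$ was arbitrary, $T_x\Lambda \subset \xi_x$, and because $\dim\Lambda=n$ this means $\Lambda$ is Legendrian.

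The one delicate point is the two-sided tangential growth bound: the definition of $(a,b)$-normal hyperbolicity in the text is phrased only for $t\to +\infty$, but the argument requires control for $t\to -\infty$ as well. I would expect to invoke Fenichel's full statement (where the tangential bound is two-sided by the very construction of the splitting) or, alternatively, replace the backward-time estimate by a forward-time one: pick any $T>0$, let $y=\varphi^{-T}(x)\in\Lambda$, $w' = D_x\varphi^{-T}w \in T_y\Lambda$, and apply the same ODE argument to the forward orbit of $y$ on $[0,T]$ to obtain $|\alpha(0)| = |\lambda_x(w)| \leq C' e^{(|b|-c)T}\|w'\|$. Letting $T\to\infty$ (and checking that $\|w'\|$ does not itself blow up faster than $e^{c T}$, which again is what normal hyperbolicity guarantees) closes the argument.
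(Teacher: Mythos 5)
Your argument coincides with the paper's: in both, the identity $L_v\lambda = dH(R)\lambda$ turns $\lambda$ evaluated on tangential vectors into a quantity that decays at rate $c$ along the orbit (the paper's one-line estimate $|\lambda(\eta)| = |((\varphi^t)^*\lambda)((\varphi^t_*)^{-1}\eta)| \leq C'\,e^{-ct}e^{bt}$ is precisely your ODE for $\alpha$ written multiplicatively), while normal hyperbolicity caps the tangential growth at rate $|b|<c$, forcing $\lambda|_{T\Lambda}=0$. The backward-time tangential bound you flag is a genuine tacit ingredient of the paper's estimate as well — $(\varphi^t_*)^{-1}\eta$ is exactly $D_x\varphi^{-t}\eta$ — so you have correctly identified the point at which the definition of $(a,b)$-normal hyperbolicity must be read as giving two-sided control in the sense of Fenichel.
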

\begin{proof} The proof resembles that one of the fact the center manifold of
a zero of the Liouville vector field is isotropic, see \cite[Prop. 11.9]{CiEl}.
Take a point $x \in \Lambda$ and a unit tangent vector $\eta \in T_x\Lambda$.
By \eqref{eq-lambda-H-R} and the definition of $b$, we have that for some $C' >0$ and all $t \geq 0$,
$$|\lambda(\eta)| = \left|\left((\varphi^t)^*\lambda\right)\left((\varphi^t_*)^{-1} \eta \right)\right| \leq C'  e^{-ct} e^{bt}.$$
Thus, $\lambda(\eta)=0$, because the right hand side converges to $0$ as $t \to +\infty$.
\end{proof}

We sum up our discussion as follows.
\begin{thm}
Let $\Lambda$ be a closed Legendrian submanifold lying in a regular level set $M=\{H=0\}$ of a (complete) contact Hamiltonian $H:\Sigma\to\R$. Assume that
$dH(R) \leq -c$ along $\Lambda$ for some $c>0$, and that $\Lambda$ is $(a,b)$-normally hyperbolic for every $b>0$.

Then
\begin{itemize}
\item[{(i)}] $\Lambda$ is a local attractor of the contact Hamiltonian flow of $H$;
 \item[{(ii)}] For every $C^2$-small  perturbation $H'$ of $H$, the level $\{H'=0\}$
contains a Legendrian submanifold $\Lambda'$ which is $C^\infty$-close to $\Lambda$ and
which is attracting for the contact  Hamiltonian flow of $H'$.
\end{itemize}
\Qed
\end{thm}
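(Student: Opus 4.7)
The plan is to read off both conclusions from the preceding discussion by assembling the two Fenichel-type inputs (local attraction from $(a,b)$-normal hyperbolicity, and structural stability when $(a,b)$-normal hyperbolicity holds for every $b>0$) with the transversality consequence of the hypothesis $dH(R)\le -c$.

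For part (i), I would first note that the assumption $dH(R)\le -c<0$ on $\Lambda$, combined with the identity $L_v\lambda = dH(R)\lambda$, already implies, as established in the preceding paragraphs, that $\Lambda$ is $(d,b)$-normally hyperbolic as an invariant submanifold of the \emph{ambient} flow $\{\varphi^t\}$ on $\Sigma$, with $d=\min(a,c)$, as soon as $|b|<c$. Choosing any such $b$, Theorem 4 of \cite{Fenichel} applied to $\{\varphi^t\}$ on $\Sigma$ gives that $\Lambda$ is a local attractor of the contact Hamiltonian flow of $H$ in $\Sigma$, which is claim (i).

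For part (ii), I would invoke the structural-stability part of the discussion: since $\Lambda$ is $(a,b)$-normally hyperbolic for every $b>0$ on $M$, Theorems 1 and 2 of \cite{Fenichel} produce, for every sufficiently $C^2$-small perturbation $H'$ of $H$, a submanifold $\Lambda'\subset M':=\{H'=0\}$, diffeomorphic and $C^\infty$-close to $\Lambda$, which is $(a',b')$-normally hyperbolic in $M'$ with $a'$ close to $a$ and $|b'|<\epsilon$ for any prescribed $\epsilon>0$. Because $R$ and $dH'$ depend continuously on $H'$ in the $C^2$ topology, the strict inequality $dH(R)\le -c$ on $\Lambda$ persists as $dH'(R)\le -c/2$, say, on $\Lambda'$ after shrinking the neighborhood of $H$. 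Now the same argument as in (i), applied with $H'$ in place of $H$ and with $\epsilon$ chosen so that $|b'|<c/2$, shows that $\Lambda'$ is normally hyperbolic in $\Sigma$ and hence is a local attractor of the contact Hamiltonian flow of $H'$. Finally, the preceding theorem (whose hypothesis $|b'|<c/2$ is exactly what we have arranged) forces $\Lambda'$ to be Legendrian, completing the verification of (ii).

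The main obstacle, and the only thing to be checked with some care, is the uniformity of the constants under perturbation: one needs the bound $dH'(R)\le -c/2$ on $\Lambda'$ (not just on $\Lambda$) and the inequality $|b'|<c/2$ to hold \emph{simultaneously}, so that the ``Warning'' preceding the statement does not invalidate the application of the previous theorem to $\Lambda'$. Both are achieved by first fixing $\epsilon<c/2$ in the structural-stability step and then shrinking the $C^2$-neighborhood of $H$ so that $\Lambda'$ is close enough to $\Lambda$ in $C^1$ to preserve the transversality of $R$ with constant $c/2$; no further work is needed.
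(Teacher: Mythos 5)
Your proposal is correct and follows exactly the paper's intended argument: the theorem is stated as a summary of the preceding Fenichel-based discussion (the paper attaches a \qedsymbol\ directly to the statement, leaving the assembly to the reader), and your write-up simply makes that assembly explicit. You are also right to flag the constant-matching issue raised by the paper's ``Warning'' and to resolve it by fixing $\epsilon<c/2$ before shrinking the $C^2$-neighbourhood of $H$ -- this is the one genuinely delicate point, and you handle it correctly (note only that $R$ itself is fixed, since $\lambda$ does not change; what needs to persist under perturbation is the bound on $dH'(R)$ along the nearby $\Lambda'$, which follows from $C^1$-closeness as you say).
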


\begin{exam}
{\rm
Let $\Sigma := J^1(X)$, $\lambda :=dz-pdq$. Consider a contact Hamiltonian $H(p,q,z) = -a(z - \phi(q))$, $a >0$. Such Hamiltonians will appear in our discussion on thermodynamics -- see formulas \eqref{eq-Ham-isentropic} and \eqref{eq-contHam-b} below.
Consider the Legendrian submanifold
\begin{equation}\label{eq-therm-lambda}
\Lambda := \{\ z=\phi(q),\ p=\phi'(q)\ \}.
\end{equation}
Then $dH(R) = -a$, $\Lambda$ consists of fixed points of the Hamiltonian flow,
and $\Lambda$ is $(a,\epsilon)$-normally hyperbolic with any $\epsilon >0$.
}
\end{exam}

\section{Applications to thermodynamics}\label{sec-therm}

\subsection{The starting question}
Let us introduce the following glossary.
A contact manifold $(\Sigma,\xi)$ is a {\it thermodynamic phase space}, a contact form $\lambda$ is {\it the Gibbs form}, trajectories of a contact Hamiltonian flow $\varphi^t: \Sigma \to \Sigma$ generated by a contact Hamiltonian $H: \Sigma\to \R$ are thermodynamic processes. Legendrian submanifolds $\Lambda \subset \Sigma$ play the role of equilibrium submanifolds.

Let $H : \Sigma \to \R$ be a contact Hamiltonian.
A {\it system of non-equilibrium thermodynamics} is a pair $(H,\Lambda)$ where
$\Lambda$ is a Legendrian submanifold lying in $M:= \{H=0\}$. One readily checks
that the contact flow $\{\varphi^t\}$ of $H$ preserves both $M$ and $\Lambda$.

\begin{question}[Stability problem of non-equilibrium thermodynamics] Given a system
$(H,\Lambda)$ and a subset $X \subset \Sigma$ of the phase space, does
there exist an initial condition $x \in X$ whose trajectory in the thermodynamic process generated by $H$ asymptotically converges to the equilibrium submanifold $\Lambda$?
\end{question}

Addressing this question, we tacitly assume local stability:  for $x$ lying in a small neighbourhood of $\Lambda$, the trajectory $\gamma:= \{\varphi^t x\}$  asymptotically converges to the equilibrium submanifold $\Lambda$ (cf. Section~\ref{sec-loc}).
Methods of contact topology provide a passage from local to global (see
in Theorem~\ref{thm-ising} below) which highlights our take on Prigogine's question
stated in the introduction.

\subsection{Warm-up: Newton's law of cooling} \label{subsec-cooling}
 Here we are motivated by \cite[p.45]{Haslach}. The coordinates $p,q,z$ in the thermodynamic phase space $\R^3$
correspond to the temperature, the entropy, and the internal energy, respectively.
Suppose that in the equilibrium
the internal energy of a thermodynamic system is given by $z=\phi(q)$.  (For instance, one can take $\phi(q) = e^q$, which corresponds to a constant volume ideal gas). The  equilibrium temperature for a given entropy $q$ equals $\phi'(q)$. The equilibrium states of our system form a Legendrian submanifold $\Lambda$ given by \eqref{eq-therm-lambda}.

Consider a non-equilibrium perturbation of the system whose equilibrium entropy and temperature are $\sigma$ and $\tau= \phi'(\sigma)$, respectively. We propose to describe the relaxation dynamics by a contact Hamiltonian
\[
H(p,q,z) = -a\big(z-\phi(q)\big) + b\big(p- \phi'(q)\big)(q-\sigma),\; a> b >0,\; \sigma >0.
\]
Introduce new coordinates
\color{black}
\[
P := p-{\phi'}(q),\ Z:= z-\phi(q),\ Q:= q-\sigma.
\] \color{black}
The change of coordinates preserves the contact form $\lambda$:
\[
\lambda = dz-pdq=dZ-PdQ.
\]
In the new coordinates $(P,Q,Z)$, the contact Hamiltonian is written as
\[
H(P,Q,Z)=-aZ +bPQ,
\]
and hence the contact Hamiltonian system reads as follows:
\[
\dot{Q}= -bQ,\ \dot{P} = -(a-b)P,\ \dot{Z} = -aZ.
\]
It follows that every trajectory of the system converges, as time goes to $+\infty$, to the point $(\phi'(\sigma), \sigma, \phi(\sigma)) \in \Lambda$. Let us note that the entropy increases along the trajectories with the initial condition $q(0) < \sigma$. Thus,  these trajectories are physically feasible.

The time evolution of the temperature $p(t)$ is given by
\[
\dot{p} = -(a-b)\big(p-\phi'(q)\big) - b\phi''(q)(q-\sigma).
\]
Since $q(t) = \sigma + o(1)$ as $t \to +\infty$,
this equation can be rewritten as
\[
\dot{p} \approx -(a-b)(p-\tau),
\]
{where $\tau := \phi' (\sigma)$ is the equilibrium temperature.}
 This is {\it Newton's law of cooling}.

Now set the parameter $b$ in the definition of $H$ to be $0$, cf. \cite{Haslach}.
Then for every point $(p,q,z)$ the integral trajectory originating at that point converges, as $t\to +\infty$, to $(\phi'(q),q, \phi(q))$, with $q(t)=q(0)$ being constant
for all $t$. Thus the Hamiltonian
\begin{equation}\label{eq-Ham-isentropic}
H(p,q,z) = -a\big(z-\phi(q)\big)
\end{equation} describes an isentropic
 relaxation (i.e. a relaxation with constant entropy) for all initial conditions. {(Isentropic processes may happen in an ideal thermodynamic system approximating a real one).}

In fact, we can modify this example in order to achieve that the entropy is increasing,
but the total increment of the entropy is arbitrarily
small for almost all initial conditions. To this end fix $a, \epsilon, N >0$ with $\epsilon N < a$ and put
\[
H(p,q,z) = -a\big(z-\phi(q)\big) - \epsilon \big(p- \phi'(q)\big)\sin^2(Nq).
\]
Introduce new coordinates:
\color{black}
\[
P := p-{\phi'}(q),\ Z:= z-\phi(q),\ Q:= q.
\]\color{black}
The change of coordinates preserves $\lambda$. In the new coordinates
\[
H (P,Q,Z) = -aZ - \epsilon P\sin^2(NQ),
\]
and the contact Hamiltonian system looks as follows:
\[
\dot{P} = - \big(a + \epsilon N \sin (2NQ)\big)P,\ \dot{Q}= \epsilon \sin^2(NQ),\ \dot{Z}= -a Z.
\]
It follows that if $q(0)$ lies in the interval $(\sigma_k,\sigma_{k+1})$ with  $\sigma_k:= \pi k/N$,  $k \in \Z$, the entropy
$q(t)$ strictly increases and converges to the right endpoint $\sigma_{k+1}$ as $t \to +\infty$.
Since for large $t$ we have $q(t)= \sigma_{k+1} + o(1)$ and since
$\sin (2N\sigma_{k+1})=0$, one readily gets that  the equation on the temperature $p(t)$ has a form $\dot{p} \approx -a(p-\tau_{k+1})$,
with $\tau_{k+1}:= \phi'(\sigma_{k+1})$ being the limiting temperature in the equilibrium. Thus we have again arrived at Newton's law of cooling.

Note that $|\sigma_{k+1}-q(t)| < \pi/N$.
Thus, taking larger and larger $N$ we can make the total increment of the entropy arbitrarily
small for almost all initial conditions.

\subsection{Thermodynamics of the Ising model} \label{subsec-ising-1}
Consider the system of spins located in the points of the integer lattice of any dimension
in a homogeneous external  magnetic field  \cite{Glauber, SuzukiKubo}.
Consider the standard contact space $\R^3 (p,q,z)$, now with $z$ being the free energy of the system, $p$ the magnetization
(i.e. the mean spin direction), and $q$ the magnitude of the external homogeneous magnetic field.
{The free energy of the system can be expressed in terms of the magnetization and the magnetic field by the following formula \cite[formula (3.1.6)]{Mussardo}:
\begin{equation} \label{eq-equiHam}
-z=F:= -\phi_\beta(q+bp) + \frac{1}{2}bp^2,
\end{equation}
where
$\phi_\beta(u) = \beta^{-1}\ln 2\cosh (\beta u)$. Here \color{black}$b \geq 0$ \color{black} and $\beta >0$ are real parameters:
$\beta$ is the inverse temperature, and $b$ is determined by the strength
of the interaction and the geometry of the model.

In the equilibrium
$p = - \partial F/\partial q = \phi_\beta'(q+bp)$ ({\it the self-consistency equation} -- see \cite[formula (3.1.7)]{Mussardo}).}
The Legendrian submanifold formed by the equilibrium states is given by
\[
\Lambda_{b,\beta}:= \{p= \phi_\beta'(q+bp),\ z= \phi_\beta(q+bp) - bp^2/2\}.
\]
Denote by $L_{b,\beta}$ its projection to the $(p,q)$-plane.

In what follows we discuss and compare two scenarios of relaxation of the Ising model.

\medskip
\noindent
{\bf First (mean field Glauber) scenario:} This is the standard model described by Glauber dynamics, which, roughly speaking, is a sophisticated Markov process on the space of all possible spin configurations \cite{Glauber}. An analysis of this dynamics, combined with the
mean field approximation, yields
the following time evolution of the magnetization (see  equation (4.3) in \cite{SuzukiKubo}):
\begin{equation}\label{eq-kubo}
\dot{p} = {-c p + c\phi_\beta'(q+bp)},\; c >0.
\end{equation}
Here $q$ and $z$ remain constant and $c$ is a time scale parameter.
We will outline a derivation of this equation based on Glauber dynamics in Section~\ref{subsec-more-on-Glauber-dynamics}.

We are interested in the asymptotic behaviour of the solution of \eqref{eq-kubo}
for a fixed $q$ with $p(0)=p_0$:
\begin{equation}
\label{eq-limp2-a}
p^{(I)}_\infty(q,p_0) := \lim_{t \to +\infty} p(t),
\end{equation}
where upper index $(I)$ indicates that we are discussing the first model of relaxation.

Consider the vector field $v^{(q)}_{b,\beta}(p)= c\left(-p + \tanh\left(\beta(q+bp)\right)\right)$ on $\R$.

\medskip
\noindent
{\sc Case A:} $b\beta < 1$. In this case a direct calculation shows that $L_{b,\beta}$ is the graph of a smooth
odd function $r_{b,\beta}(q)$. The point $r_{b,\beta}(q) \in \R$ is the unique
attracting zero of the vector field $v^{(q)}_{b,\beta}$. We record that
\begin{equation}\label{eq-limp-1-A}
p^{(I)}_\infty(p_0,q) = r_{b,\beta}(q)\;\; \forall p_0 \in \R\;.
\end{equation}

\medskip
\noindent
{\sc Case B:} $b\beta > 1$. In this case a direct calculation shows that there exists $a_{b,\beta} > 0$ and a pair
of functions $r^+_{b,\beta} : [-a_{b,\beta},+\infty) \to (0,1)$ and
$r^-_{b,\beta}(q):= - r^+_{b,\beta} (-q)$, $q \in (-\infty, a_{b,\beta})$ so that the graphs
of $r^\pm$ lie in  $L_{b,\beta}$. Moreover, for $|q| \geq a_{b,\beta}$, the vector field
$v^{(q)}_{b,\beta}$ contains unique attracting zero at $r^-_{b,\beta}(q)$ for negative $q$
and at $r^+_{b,\beta}(q)$ for positive $q$, respectively. For $|q| <  a_{b,\beta}$, in
addition to two attractive zeroes at $r^{\pm}_{b,\beta}(q)$, this field has the third, repelling zero lying between them. This zero is given by an odd decreasing function
$s_{b,\beta}$ defined on the interval $(-a_{b,\beta},a_{b,\beta})$ so that
$$L_{b,\beta} = \text{graph} (s_{b,\beta}) \cup \text{graph}(r^+_{b,\beta})  \cup \text{graph}(r^-_{b,\beta}),$$ see Figure 3.  It will be convenient to put $s_{b,\beta}(q) = -\infty$ for $q > a_{b,\beta}$ and $s_{b,\beta}(q) = +\infty$ for $q < -a_{b,\beta}$.

\begin{figure}[h]
\label{fig-vsp}
\caption{$\beta=1, b=6$}
\includegraphics[width=0.5\textwidth]{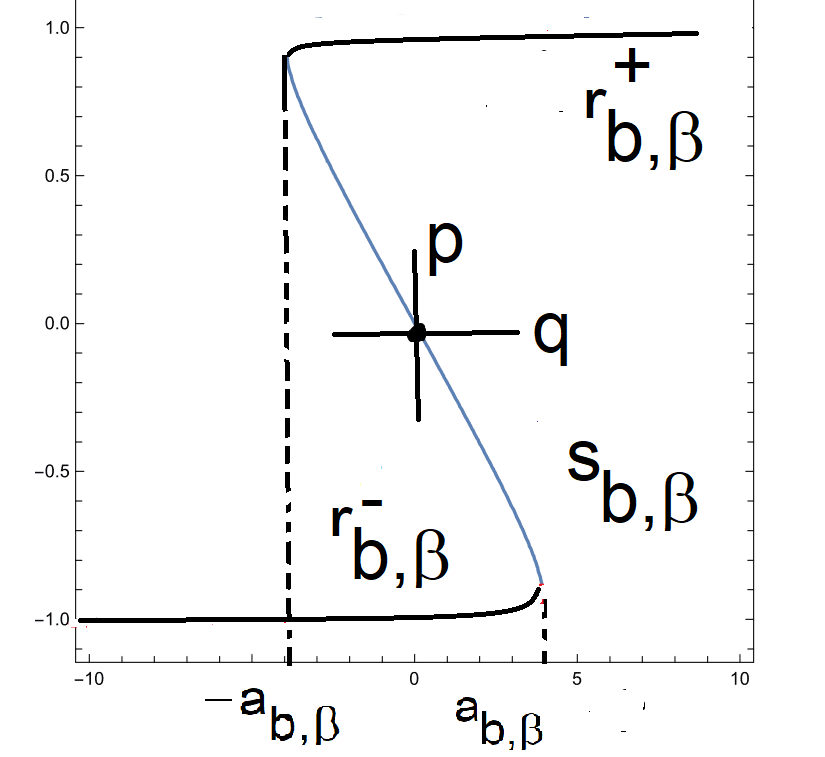}
\centering
\end{figure}

The upshot of this discussion is that
\begin{equation}\label{eq-limp-1-2a}
p^{(I)}_\infty(p_0,q) = r^+_{b,\beta}(q),\;\;\text{for}\;\; q \geq -a_{b,\beta}(q)\;\;\text{and}\;\;  p_0 \in \left(s_{b,\beta}(q),+\infty\right)
\end{equation}
and
\begin{equation}\label{eq-limp-1-2b}
p^{(I)}_\infty(p_0,q) = r^-_{b,\beta}(q),\;\;\text{for}\;\; q \leq a_{b,\beta}(q)\;\;\text{and}\;\;  p_0 \in \left(-\infty,s_{b,\beta}(q)\right).
\end{equation}
Finally, we have an unstable equilibrium
\begin{equation}\label{eq-limp-1-2c}
p^{(I)}_\infty\left(s_{b,\beta}(q),q\right)= s_{b,\beta}(q).
\end{equation}
Let us emphasize that in Case B the map $p^{(I)}$ is discontinuous,
which  manifests the phase transition of the Ising model \cite{Mussardo}.

\medskip
\noindent
{\bf Second (contact) scenario:}
In the second model, motivated by \cite[Example 2.5]{Goto-JMP2015},  relaxation is governed by the contact Hamiltonian system with the Hamiltonian
\begin{equation}\label{eq-relax}
h_{b,\beta} (p,q,z) = c(-z- F(p,q)) = c\big(-z + \phi_\beta (q+bp) - bp^2/2\big).
\end{equation}
Thus the magnetization $p$ satisfies equation \eqref{eq-kubo} as in the previous model.
The key difference, however is that in the first model the magnetic field $q$ is conserved,
while in the second one, as we shall see immediately, {\it the effective magnetic field} $q+bp$ is conserved
(also see the discussion at the end of Section~\ref{subsec-more-on-Glauber-dynamics}).

In order to understand the contact dynamics of $h_{b,\beta}$ make the following change
of variables:
\[
P:=p,\ Q:= q+bp,\ Z:= z+bp^2/2.
\]
The change of variables preserves the contact form $\lambda$:
\[
\lambda = dz-pdq = dZ-PdQ.
\]
In the new coordinates,
\[
\Lambda_{eq}= \big\{P= \phi_\beta'(Q),\ Z=\phi_\beta(Q)\big\},
\]
and the contact Hamiltonian $h_{b,\beta}$ is given by
\begin{equation} \label{eq-contHam-b}
c\big(-Z + \phi_\beta(Q)\big)\;.
\end{equation}
The corresponding contact Hamiltonian system looks as follows:
\[
\dot{P} = -cP + c\phi_\beta'(Q),\ \dot{Q} = 0,\ \dot{Z} = c\big(-Z+\phi_\beta(Q)\big).
\]
In particular, the equation for $\dot{P}$ matches \eqref{eq-kubo}.
The system can be solved explicitly: the trajectory $\big( P(t), Q(t), Z(t)\big)$ passing at $t=0$ through $(P_0 = p_0,Q_0 = q_0 + bp_0,Z_0 = z_0 + bp_0^2/2)$ is given by
\[
\Big(e^{-ct}\big(P_0-\phi_\beta'(Q_0)\big) + \phi_\beta'(Q_0),Q_0, {e^{-ct}\big(Z_0 - \phi_\beta (Q_0)\big) +\phi_\beta (Q_0)} \Big).
\]
In particular, the trajectory converges, as $t\to +\infty$, to the equilibrium state $\big(\phi_\beta' (Q_0), Q_0, \phi_\beta (Q_0)\big) =  \big(\phi_\beta' (q_0 + bp_0), q_0 + bp_0, \phi_\beta (q_0 + bp_0)\big)\in \Lambda_{b,\beta}$, exhibiting the relaxation process.
Expressing the integral trajectory in the original coordinates $(p,q,z)$ as $\big( p(t),q(t),z(t)\big)$ we see that
\[
Q(t) = q(t) + bp(t) = q_0 + bp_0.
\]
In particular,
\begin{equation}\label{eq-limp}
p^{(II)}_\infty(p_0,q_0):= \lim_{t \to +\infty} p(t)= \tanh \left(\beta(q_0 + bp_0)\right)\;.
\end{equation}
Furthermore, the magnitude $q$ of the external magnetic field changes during the relaxation process:
\[
q(t) = q_0 + b\big(p_0 -p(t)\big),
\]
and converges as $t \to +\infty$ to
\begin{equation}\label{eq-qlim}
q^{(II)}_\infty(p_0,q_0) = q_0 + b\big(p_0 -\phi_\beta'(q_0+bp_0)\big)\;.
\end{equation}

Let us mention that contact Hamiltonian \eqref{eq-contHam-b}
(and its counterpart \eqref{eq-Ham-isentropic} above) appear in the context of contact thermodynamics in \cite[formula (89)]{Bravetti} and \cite[Theorem 4.3]{Goto-JMP2015}.

\medskip
\noindent
{\sc Comparison between the scenarios:} \color{black} In case $b =0$, i.e., in the absence of the interaction between the spins (see \cite[Example 2.5]{Goto-JMP2015}), both scenarios
coincide. This is however not the case when $b > 0$. \color{black}
In the first model the external magnetic field $q$ is constant, so the Glauber equation \eqref{eq-kubo} can be considered as a family of ODEs
describing evolution of the magnetization $p$,  depending on $q$ as a parameter. In the second, contact, model $q$ becomes an independent variable which enables us to view dynamics in the thermodynamic phase space. In this approach, however, the effective magnetic field $Q= q+bp$
is an integral of motion so mathematically one can again view the evolution as a family of ODEs
on the magnetization depending on $Q$ as a parameter. In what follows we shall explore the dynamics
of {\it perturbations} of the contact Hamiltonian  $h(P,Q,Z)$ so that $Q$ is no more conserved and
the dynamics becomes genuinely three-dimensional.

Furthermore, by using formula \eqref{eq-qlim}, one can show that
the two scenarios agree up to a small error, as $t\to +\infty$, whenever the initial conditions $(p_0,q_0)$
satisfy, with the following assumptions for a sufficiently small $\delta>0$ and a sufficiently large $K >0$:
\begin{itemize}
\item $|p_0 -\phi_\beta'(q_0+bp_0)| \ll \delta$, i.e., the process starts at most $\delta$-far from the equilibrium;
    \item $|q_0| \geq K$, where in Case B we assume that $K > a_{b,\beta}$, i.e.
the process starts far from the ``phase transition region".
\end{itemize}
Then one can show that $|p^{(I)}(p_0,q_0) - p^{(II)}(p_0,q_0)| \to 0$ as
$\delta \to 0$ and $K \to +\infty$.  An important feature of the contact dynamical model is that it mollifies the discontinuity of the relaxation process in the phase transition region:
the map $p^{(II)}$ is continuous.

Let us mention that after the first version of this paper appeared, Goto \cite{Goto-new}
suggested an interesting contact geometric approach to the first (i.e., the mean field Glauber) scenario incorporating the phase transitions. This line of research was further
advanced in \cite{GLP}.

\subsection{Stability}
\label{subsec-stability-ising}
In what follows we fix \color{black}$b\geq 0$\color{black}. Consider an Ising model with the temperature $1/\beta$ whose equilibrium is given by the Legendrian submanifold
\[
\Lambda_{b,\beta}:= \{p= \phi_\beta'(q+bp),\ z= \phi_\beta(q+bp) - bp^2/2\}.
\]
{Note that the contact Hamiltonian
$h_{b,\beta}$, defined by \eqref{eq-relax},
vanishes on
$\Lambda_{b,\beta}$, and the corresponding flow preserves this submanifold point-wise.

Suppose that parameters $b,\beta$ suddenly changed to $a,\alpha$, so that the system moved
to a new equilibrium submanifold $\Lambda_{a,\alpha}$. We shall establish, under certain assumptions, existence of a relaxation process starting at $\Lambda_{a,\alpha}$ and converging to $\Lambda_{b,\beta}$ for a class of Hamiltonians which coincide with $h_{b,\beta}$ (see \eqref{eq-relax} above) {\it near} the equilibrium $\Lambda_{b,\beta}$, but which can
substantially deviate from  $h_{b,\beta}$ away from the equilibrium.
To facilitate further discussion, introduce the {\it effective magnetic field}
$Q= q+bp$ and {\it deviations from the equilibrium}  $\Lambda_{b,\beta}$ by setting
\color{black}
$$P:= p- \phi'_{\beta}(Q),\; Z:= z- \phi_{\beta}(Q)+ bp^2/2\;.$$\color{black}
Note that the contact form is preserved in these coordinates,
$dZ-PdQ=dz-pdq$. Furthermore, $h_{b,\beta}(P,Q,Z) = -cZ$, and $\Lambda_{b,\beta}=\{Z=P=0\}$.

In the new coordinates, the Legendrian $\Lambda_{a,\alpha}$ is given by the equations
\begin{equation}\label{eq-lambda-1vsp}
P - \tanh\left(\alpha Q+\alpha(a-b)\left(P+ \tanh(\beta Q)\right)\right)+ \tanh(\beta Q)=0\;,
\end{equation}
\begin{equation}\label{eq-lambda-2vsp}
\begin{split}
Z- \alpha^{-1}\log \left( 2 \cosh \left(\alpha Q+\alpha(a-b)(P+ \tanh(\beta Q)) \right) \right) \\ +
\beta^{-1}  \log\left((2\cosh(\beta Q)\right) + 0.5(a-b)(P+ \tanh(\beta Q))^2=0\;.
\end{split}
\end{equation}

\medskip
Next we introduce a class of contact Hamiltonians we are going to deal with.
Let $H$ be a complete contact Hamiltonian on $\Sigma=\R^3(P,Q,Z)$
periodic in $Q$ with period $\tau>0$.  Denote by $T: \Sigma \to \Sigma$ the shift
$Q \mapsto Q+\tau$. Let a submanifold $M \subset \Sigma$ be the union of a finite number of the connected components of the nodal set $\{H=0\}$ so that $M$ separates $\Sigma$ into two open parts, $\Sigma_-$ and $\Sigma_+$. Assume that $H$ is strictly positive on $\Sigma_+$, but in general it is allowed to change sign on $\Sigma_-$. We also assume that $M$ and $\Sigma_{\pm}$ are invariant under $T$. Suppose that the zero section $\Lambda_{b,\beta}$ lies in $M$. Assume furthermore that
\begin{itemize}
\item [{(i)}] There exists $\kappa_1 > 0$ such that $\partial H/\partial Z \leq 0$
on $\{0 < H <\kappa_1 \} \cap \Sigma_+$.
\item [{(ii)}] There exists $\kappa_2 > \kappa_1 > 0$ such that $\partial H/\partial Z  \leq 0$ on $\{H \geq \kappa_2\} \cap \Sigma_+$.
\item [{(iii)}] $\partial H/\partial Z <0$ near $\Lambda_{b,\beta}$.
\item[{(iv)}] $\Lambda_{b,\beta}$ is a local attractor of the contact flow of $H$ in
its $T$-invariant neighbourhood.
\end{itemize}

\begin{defin}
\label{def-admissible-contact-Ham}
{\rm We call such Hamiltonians $H$ {\it admissible}.
}
\end{defin}

\medskip\noindent For instance $H(P,Q,Z) = -cZ$ is admissible.

\begin{thm} \label{thm-ising} Let $H(P,Q,Z)$ be any admissible
Hamiltonian with the period $\tau$. Suppose that either (I) or (II) holds.
\begin{itemize}
\item[{(I)}] \color{black}
\begin{equation}
\label{eq-ab1}
a > b \geq 0, \alpha > \beta > 0  \;,
\end{equation}\color{black}
and there exists $Q_0$ such that $H$ is negative in a neighbourhood
of $(0,Q_0,(a-b)/2)$.
\item[{(II)}]
\begin{equation}
\label{eq-ab2}
a < b\;.
\end{equation}
\end{itemize}
Then the flow of $H$ necessarily possesses a trajectory, lying
in $\Sigma_+$
which starts at $\Lambda_{a,\alpha} \cap \Sigma_+$ and asymptotically converges to $\Lambda_{b,\beta}$.
\end{thm}

\medskip
\begin{rem}\label{rem-per}
{\rm The periodicity assumption will enable us to work with (compact!)
Legendrian circles instead of non-compact Legendrian lines.
We need compactness in order to apply the results on the existence of asymptotic
trajectories from Section \ref{sec-global} above to the proof of Theorem \ref{thm-ising}.
It is unclear to us whether the theorem remains true without this assumption.}
\end{rem}

\begin{proof} {\bf I.} Assume first that \color{black}$a > b \geq 0, \alpha > \beta >0$.\color{black}

\medskip
\noindent
{\sc Step 1:} Let us analyze the chords of the Reeb flow starting on $\Lambda_{a,\alpha}$ and
ending on $\Lambda_{b,\beta}$. Note that $P=0$ along such chords. Substituting $P=0$
into equation \eqref{eq-lambda-1vsp} we readily see that
$$\alpha(a-b)\tanh (\beta Q) = (\beta-\alpha) Q\;.$$
Since $a > b, \alpha > \beta$, the right and the left hand sides of this equation
have different signs unless $Q = 0$. Thus $Q=0$ is the unique solution. The value
of $Z$ corresponding to $P=Q=0$ equals $\log 2 (\alpha^{-1} -\beta^{-1}) < 0$.
It follows that there exists a unique chord of the Reeb flow from $\Lambda_{a,\alpha}$
to the zero section $\Lambda_{b,\beta}$. Furthermore, differentiating {the implicit expression $P(Q)$ given by \eqref{eq-lambda-1vsp}}, one readily gets that
\begin{equation}
\label{eqn-dP-dQ}
\left(1-\alpha(a-b)\right)\frac{dP}{dQ}(0) = \alpha - \beta +\alpha\beta (a-b).
\end{equation}
By the hypothesis of the theorem, the right-hand side in this equality is always positive and therefore
$dP/dQ(0)\neq 0$, which immediately
yields non-degeneracy condition \eqref{eq-nondeg}.

\medskip\noindent{\sc Step 2:} Let us explore the behavior of $\Lambda_{a,\alpha}$
as $Q \to \pm \infty$. One can readily check using \eqref{eq-lambda-1vsp}, \eqref{eq-lambda-2vsp} that then $P \to 0$ and $Z \to (a-b)/2$. Consider the
shift $T: Q \mapsto Q+\tau$. Denote by $\cV$ a rectangular neighbourhood of $x_0:= (0,Q_0, (a-b)/2)$
where $H$ is negative. Then $\Lambda_{a,\alpha}$ intersects $T^{\pm k/2}\cV$ for a sufficiently large even integer $k$, ``entering" these neighbourhoods via the left vertical edge of $T^{- k/2}\cV$ and the right vertical edge of $T^{k/2}\cV$, respectively.

\medskip\noindent{\sc Step 3:} Put $\Pi= \{|Q-Q_0| \leq k\tau/2\}$,
and consider the curve $\Lambda_{a,\alpha} \cap \Pi$ with the endpoints
lying in $T^{\pm k/2}\cV$. Modify this curve near the endpoints and
get a new Legendrian submanifold $K_{a,\alpha} \subset \Pi $
such that $K_{a,\alpha}$ coincides with $\{p=0, Z= (a-b)/2\}$ near $T^{\pm k/2}x_0$, and $K_{a,\alpha}$ coincides with $\Lambda_{a,\alpha} \cap \Pi$ on $\Sigma_+ \cap \Pi$.
Let us mention that this modification can be made smooth in $a,\alpha$ for fixed $b,\beta$,
with $K_{b,\beta}= \Lambda_{b,\beta}$ being the zero section, $x_0:= (0,Q_0, (a-b)/2)$,
and $k$ fixed and large enough.

\medskip\noindent{\sc Step 4:} Let us make the following observation.
Fix $k \in \N$, put $S^1= \R/k\tau Z$, and define $\Sigma':= T^*S^1 (P, Q\mod k\tau) \times \R(Z)$. Let $\pi:\Sigma \to \Sigma'$ be the natural projection.
Any $\tau$-periodic admissible Hamiltonian $H$ descends to a Hamiltonian $H'$ on $\Sigma'$. Put $\Lambda'_{b,\beta}= \pi(\Lambda_{b,\beta})$, $M'=\pi(M)$, and $\Sigma'_\pm = \pi(\Sigma_\pm)$. Then $(H',\Lambda'_{b,\beta})$ satisfy assumption $\clubsuit$
from Section~\ref{subsec-as-1}. Note also that the strip $\Pi$ defined in the previous
step is a fundamental domain of the covering $\pi$.

With this notation, the Legendrian $K_{a,\alpha} \subset \Pi$ constructed in Step 3
defines a closed Legendrian $K'_{a,\alpha}  \subset \Sigma'$.  By Step 1, there is a unique {non-degenerate} Reeb chord from $K'_{a,\alpha}$ to $\Lambda'_{b,\beta}$.  Furthermore,
the family of Legendrians $K'_{c, \delta}$ with $c \in [b,a]$, $\delta \in [\beta,\alpha]$
defines a Legendrian isotopy between $K'_{a,\alpha}$ and the zero section.  Thus we are in
the situation of Example~\ref{exam-onechord}, and it follows that the
pair $(K'_{a,\alpha}, \Lambda'_{b,\beta})$ is interlinked. Therefore, by Theorem~\ref{thm-A}, there exists a trajectory $\gamma' \subset \Sigma'_+$ of $H'$ which starts on $K'_{a,\alpha}$ and converges to  $\Lambda'_{b,\beta}$.

Lifting $\gamma'$ to the universal cover $\R^3 (P,Q,Z) \to (P, Q\mod \tau,Z)$,
we get a trajectory $\gamma \subset \Sigma_+$ of $H$ starting on the lift
$\bigcup_{j \in \Z} T^{kj}(K_{a,\alpha})$ of $K'_{a,\alpha}$ and converging to $\Lambda_{b,\beta}$. Here we are using that $\Lambda_{b,\beta}$ is invariant under the group of the deck transformations of the cover. Since $\gamma \subset \Sigma_+$, and since $K_{a,\alpha}$ coincides with $ \Lambda_{a,\alpha}$ in $\Sigma_+$ (here we use that $H$ is negative near $x_0$), we get that this trajectory in fact starts
on $\Lambda_{a,\alpha}$. This completes the proof of the theorem under assumption \eqref{eq-ab1}.

\medskip
\noindent
{\bf II.} Assume now that $a <b$.  Let $\cU$ be the attraction basin of the zero section, i.e.,
the collection of all points of $\Sigma$ whose semi-trajectories under the contact flow of $H$ converge to the zero section $\Lambda_{b,\beta}$. Since $\Lambda_{b,\beta}$ is a local attractor,
$\cU$ is open. Consider a Legendrian submanifold $$K:= \{P=0, Z= (a-b)/2\} \subset \Sigma_+\;.$$
Let $\pi: \Sigma \to \Sigma' = T^*S^1 \times \R$ be the natural projection to the quotient
of $\Sigma$ by the shift $T: Q \mapsto Q+\tau$. Since  the pair consisting of $\pi(K)$ and the zero section is interlinked (see Example~\ref{exam-1-jet-space-0-section-and-its-Reeb-shift-homol-bonded} above),
Theorem~\ref{thm-A} applied to these Legendrians yields $K \cap \cU \neq \emptyset$.
Let $\cV \subset \cU$ be a neighbourhood of a point $v \in K$. Then $T^k(\cV) \subset \cU$
for all $k \in \Z$. But $\Lambda_{a,\alpha}$ is asymptotic to $K$ as $Q \to \infty$. Therefore,  $\Lambda_{a,\alpha} \cap T^k(\cV) \neq \emptyset$ for all $k$ sufficiently large. This yields existence of the desired chord, and hence completes the proof of the theorem under assumption \eqref{eq-ab2}.
 \end{proof}

\begin{rem}{\rm If $a>b$, $\alpha > \beta$, the front projection of the curve $\Lambda_{a,\alpha}$ to the
$(Z,Q)$-plane is subject to a ``phase transition" at
\[
1-\alpha (a-b) =0.
\]
Indeed, the sign of $dP/dQ(0)$ -- which, by \eqref{eqn-dP-dQ}, coincides with the sign of $1-\alpha (a-b)$ --  is responsible for the convexity/concavity
of the front projection of $\Lambda_{a,\alpha}$ at the initial point of the Reeb chord from $\Lambda_{a,\alpha}$ to $\Lambda_{b,\beta}$, see Figures 4 and 5 (the projection of the Reeb chord connecting $\Lambda_{a,\alpha}$ with the zero section
(i.e. with $\Lambda_{b,\beta}$), is shown in the pictures by the dashed line).}
This ``phase transition" of the front projections in the case $b=0$ corresponds to the physical phase transition in the Ising model well-known in statistical physics (see e.g. \cite{Mussardo}, the discussion after (3.1.8)).

\begin{figure}[!tbp]
  \centering
  \begin{minipage}[b]{0.4\textwidth}
    \includegraphics[width=\textwidth]{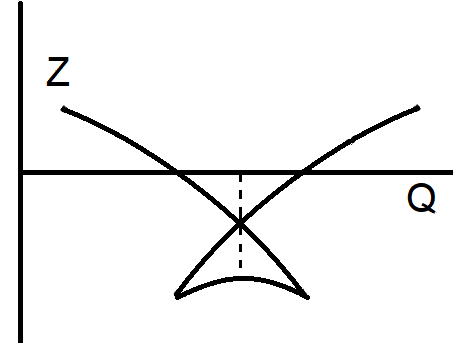}
    \caption{$\beta=0.4, \alpha= 1, b=1.5, a = 4$}
  \end{minipage}
  \hfill
  \begin{minipage}[b]{0.4\textwidth}
    \includegraphics[width=\textwidth]{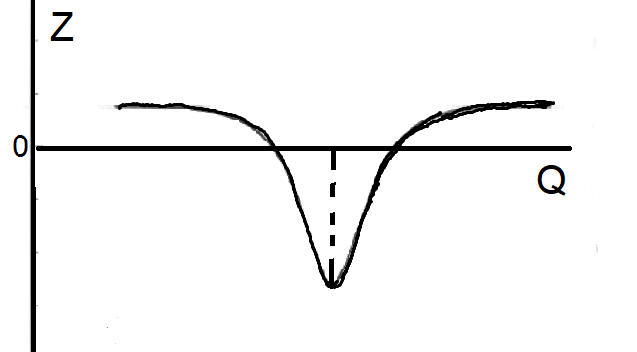}
    \caption{$\beta=0.1,\alpha= 0.2,b=2,a = 4$}
  \end{minipage}
\end{figure}

\end{rem}
\begin{rem}
{\rm In the remaining case $a>b$, $\alpha < \beta$, the conclusion of Theorem~\ref{thm-ising}
is in general wrong. Indeed, an elementary (albeit cumbersome) argument shows that in this
case $\Lambda_{a,\alpha}$ lies in the half-space $\{Z > \delta\}$ with some $\delta >0$.
Assume that $H=-cZ$ on $\{Z < \delta/2\}$, so that $\Sigma_-= \{Z >0\}$. Recall that we have
no restrictions on $H$ in $\Sigma_-$. Assume that $H$ vanishes on $\{Z =\delta\}$,
having it as a regular component of the nodal set. Then the latter hypersurface is invariant under the flow and separates $\Lambda_{a,\alpha}$ and $\Lambda_{b,\beta}$,
ruling out the existence of asymptotic chords.
}
\end{rem}

\subsection{Glauber dynamics}
\label{subsec-more-on-Glauber-dynamics}
In this section we recall a derivation of the Glauber dynamics for the
mean field Ising model. In the course of our exposition, which mostly follows \cite{Glauber,SuzukiKubo}, we elaborate on some details of the mean field approximation. Then, in  next section we propose a modification of the Glauber dynamics which gives rise to admissible Hamiltonians appearing in Theorem~\ref{thm-ising} above.

We are going to work with a set $G$ of $N$ spins. Write $\cS$ for the space of configurations $\sigma: G \to \{-1,1\}$. We consider $\cS$ as a finite graph, where $\sigma$ and $\sigma'$ are connected by an edge whenever there exists unique element $g \in G$ such that $\sigma(g) = -\sigma'(g)$ and $\sigma(h) = \sigma'(h)$ for all $h\in G$, $h \neq g$. In this case we write
$\sigma' =: \bar{\sigma}_g$. Define a Markov chain on $\cS$, where the  probability
of transition from $\sigma$ to $\bar{\sigma}_g$ is denoted by $\hat{w}_g(\sigma)$.
We say that a configuration $\sigma$ {\it flips} at a point $g$ of the lattice if the spin
$\sigma(g)$ changes the sign at a given step of the Markov evolution. Note that with our notation $\overline{(\bar{\sigma}_g)}_g = \sigma$, and therefore the probability
of the transition from $\bar{\sigma}_g$ back to $\sigma$ equals $\hat{w}_g(\bar{\sigma}_g)$.

The Hamiltonian \footnote{Here the word ``Hamiltonian" is used in the sense of statistical mechanics, as opposed to contact Hamiltonians we encounter throughout the paper. It models the energy of a configuration $\sigma$.} of the Ising model is given by
$$\cH(\sigma)= -q \sum_g \sigma(g) - \sum_{g\neq h}J_{gh} \sigma(g)\sigma(h)\;,$$
where $q$ is the external magnetic field and coefficients $J_{gh}$ describe the interaction
between the spins. In the equilibrium, the probability of a configuration $\sigma$ is
given by the Gibbs distribution
$$\varpi(\sigma) = \frac{e^{-\beta \cH(\sigma)}}{\mathcal Z}\;,$$
where $\mathcal{Z} = \sum_\sigma e^{-\beta \cH(\sigma)}$ is the partition function.
Following \cite{Glauber,SuzukiKubo},
we assume the following properties of the transition probabilities:

\smallskip
\noindent
(I) ({\it detailed balance}) $$\frac{\hat{w}_g(\sigma)}{\hat{w}_g(\bar{\sigma}_g)}=
\frac{\varpi(\bar{\sigma}_g)}{\varpi(\sigma)} = \frac{e^{-\beta \cH(\bar{\sigma}_g)}}{e^{-\beta \cH(\sigma)}}\;.$$

In other words, the Glauber Markov chain is a reversible Markov chain possessing the Gibbs distribution as the stationary one. The detailed balance condition yields that there
exists $\hat{c}(\sigma,g) >0$ such that
\begin{equation}\label{eq-detbal}
\hat{w}_g(\sigma) = \hat{c}(\sigma,g) \frac{\varpi(\bar{\sigma}_g)}{\varpi(\sigma)+ \varpi(\bar{\sigma}_g)}\;, \hat{w}_g(\bar{\sigma}_g)= \hat{c}(\sigma,g) \frac{\varpi(\sigma)}{\varpi(\sigma)+ \varpi(\bar{\sigma}_g)}\;.
\end{equation}
This means that the transition probability $\hat{w}_g(\sigma)$ is proportional
to the conditional probability with respect to the stationary distribution $\varpi$ of the following event:
{\it a configuration equals $\bar{\sigma}_g$ provided it coincides with $\bar{\sigma}_g$ outside $g$.} The probability $\hat{w}_g(\bar{\sigma}_g)$ of the transition from $\bar{\sigma}_g$ to $\sigma$ admits a similar interpretation.

Our next assumption is as follows:

\smallskip
\noindent
(II)({\it time scale parameter}) The proportionality coefficient $\hat{c}(\sigma,g)$ defined above
does not depend on $\sigma$ and $g$:
$$\hat{c}(\sigma,g) = \hat{c} >0\;,$$
for all $\sigma \in \cS, g \in G$.
 Let us mention that this yields
$\hat{w}_g(\sigma) + \hat{w}_g(\bar{\sigma}_g)= \hat{c}$ for all $\sigma \in \cS$ and $g \in G$. The parameter $\hat{c}$ appears in \cite[p.296]{Glauber} as the time scale on which all transitions take place. We shall see in  Remark~\ref{rem-rate} below
that $\hat{c}$ is related to the relaxation rate of the magnetization of the system towards the equilibrium.

\smallskip
It follows from (I) and (II) that
\begin{equation}
\label{eq-w}
\hat{w}_g(\sigma) = \frac{\hat{c}}{2}\left(1+\tanh (\beta \Delta/2)\right),\;
\end{equation}
where
\begin{equation}\label{eq-Delta}
\Delta= \cH(\sigma)-\cH(\bar{\sigma}_g) = -2\sigma(g)\left(q+ \sum_{h\neq g} J_{gh}\sigma(h)\right)\;.
\end{equation}

\begin{rem} {\rm It is instructive to compare \eqref{eq-w},\eqref{eq-Delta} with formula (2.7) in \cite{SuzukiKubo},
stating that (in our notation)
$$\hat{w}_g(\sigma) = \frac{\hat{c}}{2}\left(1 - \sigma(g)\tanh (\beta \Delta')\right)\;,$$
with (see (2.5) in \cite{SuzukiKubo})
$$\Delta' = q+ \sum_{h\neq g} J_{gh} \sigma(h)\;.$$
One readily sees that both expressions for $\hat{w}_g(\sigma)$ coincide since $\sigma(g) \in \{-1,+1\}$
and the hyperbolic tangent function $\tanh$ is odd.
}
\end{rem}

Consider the mean spin of a configuration $\sigma$,
$$m(\sigma):= \frac{1}{N}\sum_g \sigma(g)\;.$$
Given a probability measure $\pi$ on $\cS$, {\it the magnetization
of the system} is defined as the expectation
of $m$ with respect to $\pi$: $p = \mathbb{E}(m(\sigma))$.

Let us discuss now the time evolution of the Glauber Markov chain.
We start with the case of the discrete time. Introduce a parameter $\tau$
having a dimension of time, and assume that the transitions happen at time moments which are
positive integral multiples of $\tau$. Let $\pi(\cdot,t)$ be a $t$-dependent family of  probability measures on $\cS$ evolving under the Markov chain. We have the following master equation:
\[
\pi(\sigma, t + \tau) =\pi(\sigma,t)\Big( 1- \sum_{g \in G} \hat{w}_g(\sigma)\Big) + \sum_{g \in G}\pi(\bar{\sigma}_g,t) \hat{w}_g(\bar{\sigma}_g).
\]

Let us pass now to the continuous time $t$.
  \footnote{This passage is somewhat implicit in \cite{Glauber,SuzukiKubo}.
We thank the referee for explaining it to us.}  We consider $\tau$ as a small parameter,
and assume that the constant $\hat{c}$ from (II) is proportional to $\tau$, i.e.,
\begin{equation}\label{eq-c-vsp}
\hat{c} = \tau c\;,
\end{equation}
for some $c >0$. Then, according to \eqref{eq-w}, the transition probabilities $\hat{w}_g$ are given by $\hat{w}_g(\sigma) = \tau {w}_g(\sigma)$
with
\begin{equation}\label{eq-w-1}
{w}_g(\sigma) = \frac{{c}}{2}\left(1+\tanh (\beta \Delta/2)\right)\;,
\end{equation}
where $\Delta$ is defined in \eqref{eq-Delta}.
Rewrite the master equation in the form
\[
\frac{\pi(\sigma, t + \tau) - \pi(\sigma,t)}{\tau} = -  \sum_{g \in G} \pi(\sigma,t) {w}_g(\sigma) + \sum_{g \in G}\pi(\bar{\sigma}_g,t) {w}_g(\bar{\sigma}_g).
\]
Passing to the limit when $\tau \to 0$, we get
an ordinary differential equation
\begin{equation}\label{eq-master}
\frac{d}{dt} \pi (\sigma,t) = \sum_{g \in G} \Big(- \pi(\sigma,t){w}_g(\sigma) + \pi(\bar{\sigma}_g,t){w}_g(\bar{\sigma}_g)\Big).
\end{equation}
Denote by $\mathbb{E}_t$ the expectation with respect to the measure $\pi(\cdot,t)$.
In particular, given $g \in G$,
\[
\mathbb{E}_t \big(\sigma(g)\big) = \sum_\sigma \sigma(g)\pi(\sigma,t).
\]
We calculate
\[
\frac{d}{dt} \mathbb{E}_t \big(\sigma(g)\big) =  \frac{d}{dt} \sum_\sigma \sigma(g)\pi(\sigma,t) =  \sum_\sigma \sigma(g) \frac{d}{dt} \pi(\sigma,t).
\]
Now we use formula \eqref{eq-master} in order to compute the last term, re-denoting
the summation index in this formula by $h \in G$. Splitting the formula into the cases
$h=g$ and $h \neq g$, we write $\dot{p} = I_1 + I_2$, where $I_1$ and $I_2$ are defined as
follows:
$$I_1 = -\sum_{\sigma} \pi(\sigma,t)\sigma(g){w}_g(\sigma) + \sum_{\sigma} \pi(\bar{\sigma}_g,t)\sigma(g) {w}_g(\bar{\sigma}_g)= -2\mathbb{E}_t\big(\sigma(g){w}_g(\sigma)\big)\;,$$
where the last equality follows from  $\sigma(g) = -\bar{\sigma}_g(g)$,
and
$$I_2 = \sum_{h\neq g} \Big( -\sum_{\sigma} \pi(\sigma,t)\sigma(g){w}_h(\sigma) + \sum_{\sigma} \pi(\bar{\sigma}_h,t)\sigma(g) {w}_h(\bar{\sigma}_h)\Big)=$$ $$
\sum_{h\neq g} \Big( -\sum_{\sigma} \pi(\sigma,t)\sigma(g){w}_h(\sigma) + \sum_{\bar{\sigma}_h} \pi(\bar{\sigma}_h,t)\bar{\sigma}_h(g) {w}_h(\bar{\sigma}_h)\Big)=0\;.$$
Note that in the second equality in the last formula we have used that
$\bar{\sigma}_h(g)= \sigma(g)$ if $h \neq g$. We conclude that
\begin{equation}\label{eq-evol-vsp}
\frac{d}{dt} \mathbb{E}_t \big(\sigma(g)\big) = -2\mathbb{E}_t\big(\sigma(g){w}_g(\sigma)\big)\;.
\end{equation}
By definition, the magnetization of the system
equals
$$p(t) = \mathbb{E}_t\big(m(\sigma)\big) = \frac{1}{N}\sum_g  \mathbb{E}_t \big(\sigma(g)\big)\;.$$
Thus
$$\dot{p}(t) = -\frac{2}{N}\sum_g \mathbb{E}_t\big(\sigma(g){w}_g(\sigma)\big)\;.$$
Substituting formula \eqref{eq-w-1} for $w_g$ we get that
\begin{equation}\label{eq-prob}
\dot{p} = - cp + \frac{c}{N}\sum_g \mathbb{E}_t\left(\tanh \beta(q+ \sum_{h\neq g}J_{gh}\sigma(h))\right).
\end{equation}

\medskip\noindent{\sc Mean field approximation:} At this point we impose a non-rigorous physical assumption, called the {\it mean field approximation} (in \cite{SuzukiKubo} it is called ``molecular-field treatment"): with a high probability with respect to measure $\pi(\cdot,t)$, $$\sum_{g\neq h}J_{gh}\sigma(h) \approx bp,$$
where $b$ is a coefficient depending on the interaction $J_{gh}$, and $p$ is the magnetization. The essence of this assumption is that the sum $b^{-1} \sum_{h\neq g}J_{gh}\sigma(h)$ is taken over a ``large" number of neighbours, so an averaging takes place and its value is concentrated near the magnetization.  This holds, for instance, in the Curie-Weiss model when all spins interact with the constant strength: $J_{gh} \approx b/N$. It is plausible that Glauber's ODE can be derived rigorously in an appropriate thermodynamic limit of the Curie-Weiss model (thanks to S.~Shlosman for this comment), but apparently this is not yet done.

\medskip\noindent
With the mean field assumption we get
\begin{equation}\label{eq-evolution}
\dot{p} = -c\big(p - \tanh \left(\beta(q+bp)\right)\big).
\end{equation}
Thus, we derived equation \eqref{eq-kubo} above describing the Glauber evolution
in the mean field approximation.

\begin{rem}\label{rem-rate}{\rm
Let us discuss a link between the coefficient $\hat{c}=c\tau$ and the relaxation of the Markov chain towards the equilibrium distribution $\varpi$ can be also seen as follows. Denote
by $\Pi(\hat{c})$ the matrix of the Glauber Markov chain, where $\hat{c}= c\tau$ is considered
as a small parameter. By formulas \eqref{eq-detbal}, for a sufficiently small constant $\hat{c}_0 >0$ and for all $0 < \hat{c} < \hat{c}_0$
\begin{equation}
\label{eq-lazy}
\Pi(\hat{c}) = \left(1-\frac{\hat{c}}{\hat{c}_0}\right){\bf 1} + \frac{\hat{c}}{\hat{c}_0} \Pi(\hat{c}_0)\;,
\end{equation}
i.e., $\Pi(\hat{c})$ is an $\alpha$-lazy version of $\Pi(\hat{c}_0)$ with
$\alpha=\hat{c}/\hat{c}_0$, in the terminology of \cite{Fried}. It follows
from the spectral theory of Markov chains \cite[Chapter 12]{LP} that for small $\hat{c}$
all the eigenvalues of $\Pi(\hat{c})$ are positive, the maximal one equals $1$, and
the next to maximal is of the form $1-\gamma(\hat{c})$ with $\gamma(\hat{c})$, {\it the spectral gap}, being strictly positive. The spectral gap is responsible for the relaxation of
the Markov evolution towards the stationary distribution $\varpi$. It follows immediately
from \eqref{eq-lazy} that
$\gamma(\hat{c}) = (\hat{c}/\hat{c}_0)\gamma(\hat{c}_0)$,
i.e.,  {\it the spectral gap of the Glauber Markov chain is proportional to} $\hat{c}$.
}
\end{rem}

Consider a modification of the Glauber dynamics in which the magnetic field $q$ is continuously updated so that the effective magnetic field $Q$ remains constant. Let us explain how this modified dynamics is related to the second (contact) scenario in Section~\ref{subsec-ising-1}.
Namely, put \color{black} $\phi_{\beta}(x) = \beta^{-1} \log 2 (\cosh \beta x)$. \color{black}
 Define new coordinates $Q:= q+bp$ (effective magnetic field), \color{black}$P:= p- \phi'_{\beta}(Q)$\color{black}
(the deviation from the equilibrium magnetization, called also
{\it de Donder's  affinity}, see Section 4.1 in \cite{Haslach}).
With these $P,Q$, we rewrite equation \eqref{eq-evolution} above as
\begin{equation} \label{eq-Pc-vsp}
\dot{P}=-cP\;.
\end{equation}
Let us now choose a
coordinate $Z$ and a
contact Hamiltonian $H_0 (P,Q,Z)$ such that
the equations \eqref{eq-Pc-vsp} and $\dot{Q}=0$ appear in the system of three equations defined by $H_0$.
Such
$Z$ and
$H_0$ can be chosen as
$H_0(P,Q,Z):=-cZ$, where
$Z := z- \phi_{\beta}(Q) + bp^2/2$ (the deviation from the equilibrium free energy).
It is easy to see that $H_0$ is exactly the Hamiltonian
\eqref{eq-relax}
written in the coordinates $(P,Q,Z)$ defined here. (Note that $P,Z$ defined here are different from the ones used in Section~\ref{subsec-ising-1}).

\subsection{Perturbed Glauber dynamics}\label{subsec-pG}

Suppose now that we are given a perturbation $$H(P,Q,Z)= -cZ +F(P,Q,Z)$$ of $H_0$
so that $H$ is an admissible Hamiltonian in the sense of Definition~\ref{def-admissible-contact-Ham}.
The corresponding evolution is governed by the system
\begin{equation} \label{eq-system}
\begin{cases}
\dot{P} = -(c- \frac{\partial F}{\partial Z})P + \frac{ \partial F}{\partial Q} \\
\dot{Q}= -\frac{\partial F}{\partial P} \\
\dot{Z} = -cZ +F - P\frac{\partial F}{\partial P}
\end{cases}
\end{equation}
In the discussion below, we fix $F$ and assume that its derivatives are small compared to $c$. As it turns out
(see Remark \ref{rem-smallness} below), in this case
 the first equation of the system \eqref{eq-system}
 still admits an interpretation at the microscopic level via a perturbed Glauber dynamics, provided
the functions $Q$ and $Z$ are considered as time-dependent parameters.

More precisely, we shall make the following modifications (M1),(M2),(M3) to the continuous time Glauber dynamics described above:
\begin{itemize}
\item[{(M1)}] the time scale parameter equals
$c'= c- \frac{\partial F}{\partial Z}(P,Q,Z)$;
\item[{(M2)}] the transition probabilities \eqref{eq-w-1} (with $c$ replaced by $c'$)
are subject to an additive perturbation (noise) of the form
 $ - \frac{1}{2}\sigma(g) r (\sigma)$ satisfying \color{black}
\begin{equation}\label{eq-noise-vsp}
\mathbb{E}_t(r) = \frac{\partial F}{\partial Q}(P,Q,Z) - \phi''_{\beta} (Q) \frac{\partial F}{\partial P}(P,Q,Z)\;,
\end{equation}\color{black}
so that modified transition probabilities $w'_g(\sigma)$ are given by
\begin{equation}\label{eq-prob-modified}
w'_g(\sigma) =  \frac{c'}{2} \left(1- \sigma(g)\tanh \Big(\beta \big(q+ \sum_{h\neq g} J_{gh}\sigma(h)\big)\Big)\right)- \frac{1}{2}\sigma(g)r(\sigma)\;.
\end{equation}
\end{itemize}
Here $r$ is a small real-valued function of $\sigma$. It can be interpreted as a ``bias": when $r(\sigma)> 0$, spins $+1$ flip with smaller probability, and when $r (\sigma) <  0$, spins $-1$
flip with larger probability than in the unperturbed case.
We shall further assume that
\begin{itemize}
\item[{(M3)}] In the course of the dynamics, $Q$ and $Z$ are continuously updated according to the last two equations of \eqref{eq-system}.
\end{itemize}

\begin{rem}\label{rem-smallness}
{\rm Note that the time scale parameter
$c'$ should remain positive and the probabilities
$w'_g(\sigma)$ should lie in $[0,1]$. This holds true provided the first derivatives
of $F$ are sufficiently small compared to $c$
and the function $r$ is sufficiently small.
}
\end{rem}

We claim that under these assumptions the evolution of $P$ is given by the
first equation in the contact Hamiltonian system \eqref{eq-system}.
Indeed, modifying  \eqref{eq-evol-vsp} by using  \eqref{eq-prob-modified}
and applying the mean field assumption we get that
$$\dot{p} = -2\mathbb{E}_t(\sigma(g)w'_g(\sigma))= -c'P + \mathbb{E}_t(r)\;.$$
Substituting the expression for $c'$ from (M1) and using \eqref{eq-noise-vsp}
we obtain \color{black}
\begin{equation}\label{eq- Pp-vsp}
\dot{p}= -\left(c - \frac{\partial F}{\partial Z}\right)P + \frac{\partial F}{\partial Q} - \phi''_{\beta} (Q) \frac{\partial F}{\partial P}\;.
\end{equation}\color{black}
Recalling that by definition \color{black}$P = p - \phi'_{\beta}(Q)$\color{black}, and by the second equation in
\eqref{eq-system} $\dot{Q} = -\frac{\partial F}{\partial P}$, we get that \color{black}
$$\dot{P} = \dot{p} + \phi''_{\beta} (Q) \frac{\partial F}{\partial P}\;.$$\color{black}
Therefore, by \eqref{eq- Pp-vsp}
$$\dot{P}= -\left(c - \frac{\partial F}{\partial Z}\right)P + \frac{\partial F}{\partial Q}\;,$$ which is the first equation of \eqref{eq-system}. This proves the claim.

Now we readily get the following conclusions:

\bigskip
\noindent
1. {\it With the continuous time Glauber dynamics being adjusted according to (M1), (M2), (M3), the corresponding adjusted Glauber evolution in the mean field approximation  can be approximated by the contact Hamiltonian flow of $H$.}

\bigskip
\noindent
2. {\it Theorem~\ref{thm-ising} is applicable to $H$,
yielding a relaxation process defined by $H$ that starts at a point of $\Lambda_{a,\alpha}$ and asymptotically converges to $\Lambda_{b,\beta}$.}

\bigskip
Let us comment on the notion of admissibility for contact Hamiltonians introduced in Definition~\ref{def-admissible-contact-Ham}. Assumption (iv) means that the perturbed dynamics is close to  (the contact version of) the Glauber one near the equilibrium. An interpretation of assumptions (i)-(iii) is that away of the equilibrium the perturbation of the time scale parameter
as well as the noise are mild.
Periodicity in $Q$ seems to be technical (cf. Remark \ref{rem-per} above). It is unclear to us whether a mere boundedness of the perturbation $F$ (possibly, with several derivatives) would enable us to get an analogue of Theorem~\ref{thm-ising}.

It would be interesting to explore relaxation processes for Hamiltonians
of the simplified form $-cZ+F(P,Q)$, with $F=0$ near $P=0$. It is unclear to us whether the relaxation processes in this case can be detected by more elementary methods of two-dimensional dynamics in the $(P,Q)$-plane.

Let us mention also that an analysis of coupled Ising models (see e.g. \cite{Ising-coupled})
similar to the one we performed for the single model should lead to interesting asymptotic
questions of multi-dimensional contact dynamics in the space \color{black}$(\R^{2n+1}, dz- \sum p_i dq_i)$, \color{black}
where $z$ is the total free energy and $(p_i,q_i)$ are the magnetization and
the magnetic field for the $i$-th model. It is likely
that the methods developed in the present paper enable one to detect relaxation processes
in this more sophisticated model.

\bigskip
\noindent
{\bf Acknowledgment:} We thank Oleg Gendelman, Emmanuel Giroux, Leonid Levitov, Eva Miranda, Stefan Nemirovski, Ron Peled and Michael Shapiro for useful discussions. We are grateful to Alessandro Bravetti, Shin-itiro Goto, Shai Lerer, and Senya Shlosman for very helpful comments on the manuscript. Finally, we thank the anonymous referee for illuminating remarks
and useful suggestions, as well as for indicating a number of mistakes.

\bibliographystyle{alpha}

\end{document}